\title             {Approximation algorithms for connectivity augmentation problems}
\titlerunning{Approximation algorithms for connectivity augmentation problems}
\author{Zeev Nutov}{The Open University of Israel, nutov@openu.ac.il}{}{}{}
\authorrunning{Zeev Nutov}
\begin{document}

\maketitle

\newcommand {\ignore} [1] {}

\newtheorem{fact}[lemma]{Fact}

% Calligraphic 
\def\FF     {{\cal F}}
\def\II     {{\cal I}}

% Greek letters
\def\eps  {\epsilon}
\def\al    {\alpha}

% Sets
\def\empt {\emptyset}
\def\sem  {\setminus}
\def\subs  {\subseteq}

% Accents etc

% For brevity
\def\f   {\frac}

\keywords{connectivity augmentation, approximation algorithm, element connectivity}

\begin{abstract}
In {\sc Connectivity Augmentation} problems we are given a graph $H=(V,E_H)$ and an edge set $E$ on $V$,
and seek a min-size edge set $J \subs E$ such that $H \cup J$ has larger edge/node connectivity than~$H$. 
In the {\sc Edge-Connectivity Augmentation} problem we need to increase the edge-connectivity by~$1$.
In the {\sc Block-Tree Augmentation} problem $H$ is connected and $H \cup S$ should be $2$-connected.
In {\sc Leaf-to-Leaf Connectivity Augmentation} problems every edge in $E$ connects minimal deficient sets.
For this version we give a simple combinatorial approximation algorithm with ratio $5/3$,
improving the $1.91$ approximation of \cite{BGA} (see also \cite{N-2c}), that applies for the general case.
We also show by a simple proof that if the {\sc Steiner Tree} problem admits approximation ratio $\al$ 
then the general version admits approximation ratio $1+\ln(4-x)+\eps$, 
where $x$ is the solution to the equation $1+\ln(4-x)=\al+(\al-1)x$.
For the currently best value of $\al=\ln 4+\eps$ \cite{BGRS} this gives ratio $1.942$.
This is slightly worse than the ratio $1.91$ of \cite{BGA}, 
but has the advantage of using {\sc Steiner Tree} approximation as a ``black box'', 
giving ratio $< 1.9$ if ratio $\al \leq 1.35$ can be achieved. 

In the {\sc Element Connectivity Augmentation} problem we are given a graph $G=(V,E)$, $S \subs V$, 
and connectivity requirements $r=\{r(u,v):u,v \in S\}$. The goal is to find a min-size set $J$ of new edges on $S$
(any edge is allowed and parallel edges are allowed) 
such that for all $u,v \in S$ the graph $G \cup J$ contains $r(u,v)$ $uv$-paths 
such that no two of them have an edge or a node in $V \sem S$ in common.
The problem is NP-hard even when $\displaystyle r_{\max} = \max_{u,v \in S} r(u,v)=2$. 
We obtain approximation ratio $3/2$, improving the previous ratio $7/4$ of \cite{N}.
For the case of degree bounds on $S$ we obtain the same ratio with just $+1$ degree violation, 
which is tight, since deciding whether there exists a feasible solution is NP-hard even when $r_{\max}=2$.
A similar result is shown for the more general problem 
of covering a skew-supermodular set function by a min-size set of edges.
\end{abstract}

%%%%%%%%%%%%%%
\section{Introduction} \label{s:intro}
%%%%%%%%%%%%%%%

A graph is {\bf $k$-connected} if it contains $k$ internally disjoint paths between every pair of nodes;
if the paths are only required to be edge disjoint then the graph is {\bf $k$-edge-connected}.
In {\sc Connectivity Augmentation} problems we are given an ``initial'' graph $G_0=(V,E_0)$ and an edge set $E$ on $V$,
and seek a min-size edge set $J \subs E$ such that 
$G_0 \cup J=(V,E_0 \cup J)$ has larger edge/node connectivity than $G_0$. 
\begin{itemize}
\item 
In the {\sc Edge-Connectivity Augmentation} problem we seek to increase the edge connectivity by one,
so $G_0$ is $k$-edge-connected and $G_0 \cup J$ should be $(k+1)$-edge connected.
\item
In the {\sc $2$-Connectivity Augmentation} problem we seek to make a connected graph $2$-connected, 
so $G_0$ is connected and $G_0 \cup J$ should be $2$-connected.
\end{itemize}

A {\bf cactus} is a ``tree-of-cycles'', namely, 
a $2$-edge-connected graph in which every block is a cycle (equivalently - every edge belongs to exactly one simple cycle).
By \cite{DKL}, the {\sc Edge-Connectivity Augmentation} problem is equivalent to the following problem:

\begin{center} \fbox{\begin{minipage}{0.98\textwidth}
\underline{\sc Cactus Augmentation} \\
{\em Input:} \ \ A cactus $T=(V,E_T)$ and an edge set $E$ on $V$. \\
{\em Output:} A  min-size edge set $J \subs E$ such that $T \cup J$ is $3$-edge-connected.
\end{minipage}} \end{center}

It is also known (c.f. \cite{K-sur}) that the {\sc $2$-Connectivity Augmentation} problem is equivalent to the following problem:

\begin{center} \fbox{\begin{minipage}{0.98\textwidth}
\underline{\sc Block-Tree Augmentation} \\
{\em Input:} \ \ A tree $T=(V,E_T)$ and an edge set $E$ on $V$. \\
{\em Output:} A  min-size edge set $F \subs E$ such that $T \cup F$ is $2$-connected.
\end{minipage}} \end{center}

A more general problem than {\sc Cactus Augmentation} is as follows.
Two sets $A,B$ {\bf cross} if $A \cap B \neq \empt$ and $A \cup B \neq V$. 
A set family $\FF$ on a groundset $V$ is a {\bf crossing family} if $A \cap B,A \cup B \in \FF$ whenever $A,B \in \FF$ cross;
$\FF$ is a {\bf symmetric family} if $V \sem A \in \FF$ whenever $A \in \FF$.
The $2$-edge-cuts of a cactus form a symmetric crossing family, 
with the additional property that whenever $A,B \in \FF$ cross and $A \sem B,B \sem A$ are both non-empty, 
the set $(A \sem B) \cup (B \sem A)$ is not in $\FF$; 
such a symmetric crossing family is called {\bf proper} \cite{DN}. 
Dinitz, Karzanov, and Lomonosov \cite{DKL} showed that the family of
minimum edge cuts of a graph $G$ can be represented by $2$-edge cuts of a cactus. % (see Section~\ref{s:cross} for more details).
Furthermore, when the edge-connectivity of $G$ is odd, the min-cuts form a laminar family and thus can be represented by a tree.
Dinitz and Nutov \cite[Theorem 4.2]{DN} (see also \cite[Theorem 2.7]{N-Th})
extended this by showing that an arbitrary symmetric crossing family $\FF$ can be represented 
by $2$-edge cuts and specified $1$-node cuts of a cactus; 
when $\FF$ is a proper crossing family this reduces to the cactus representation of \cite{DKL}. 
% (A representation identical to the one of \cite{DN,N-Th} was announced later in \cite{FJ}.)
We say that an edge $f$ {\bf covers} a set $A$ if $f$ has exactly one end in $A$.
The following problem combines the difficulties 
of the {\sc Cactus Augmentation} and the {\sc Block-Tree Augmentation} problems, see \cite{N-2c}.

\begin{center} \fbox{\begin{minipage}{0.98\textwidth}
\underline{\sc Crossing Family Augmentation} \\
{\em Input:} \ \ A graph $G=(V,E)$ and a symmetric crossing family $\FF$ on $V$. \\
{\em Output:} A  min-size edge set $J \subs E$ that covers $\FF$.
\end{minipage}} \end{center}

In this problem, the family $\FF$ may not be given explicitly, 
but we require that certain queries related to $\FF$ can be answered in polynomial time, see \cite{N-2c}. 
{\sc Block-Tree Augmentation} and {\sc Crossing Family Augmentation} admit ratio $2$ \cite{RW,FJW}, 
that applies also for the min-cost versions of the problems.

The inclusion minimal members of a set family $\FF$ are called {\bf leaves}.
In the {\sc Leaf-to-Leaf Crossing Family Augmentation} problem, every edge in $E$ connects two leaves of $\FF$.
In the {\sc Leaf-to-Leaf Block-Tree Augmentation} problem, every edge in $E$ connects two leaves of the input tree $T$.

\begin{theorem} \label{t:1}
The leaf-to leaf versions of {\sc Crossing Family Augmentation} and {\sc Block-Tree Augmentation} admit ratio $5/3$.
\end{theorem}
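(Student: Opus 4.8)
The plan is to reduce both problems to a single core problem — covering a tree $T$ by links whose two endpoints are leaves of $T$ — and then to attack that core problem with a matching-based algorithm analysed against an edge-cover lower bound.

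\textbf{Reduction to a tree.} For {\sc Leaf-to-Leaf Block-Tree Augmentation} there is nothing to do: the input is a tree $T$ and a feasible solution is precisely a set of leaf-to-leaf links covering all edges of $T$. For {\sc Leaf-to-Leaf Crossing Family Augmentation} I would use the representation of a symmetric crossing family $\FF$ by the $2$-edge cuts and specified $1$-node cuts of a cactus \cite{DN}; the leaves of $\FF$ correspond to ``leaf objects'' of the cactus, so the leaf-to-leaf restriction is preserved. Replacing each cactus cycle by a small path gadget — with the specified $1$-node cuts playing the role of ordinary tree edges — turns covering $\FF$ into covering a tree $T$ by leaf-to-leaf links without changing the optimum. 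Hence it suffices to give a $5/3$-approximation for: given a tree $T$ and a set $E$ of links joining two leaves of $T$, find a minimum-size $J \subs E$ covering every edge of $T$.

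\textbf{Lower bound and structure.} Each leaf $\ell$ of $T$ is incident to a unique tree edge $e_\ell$, and a link covers $e_\ell$ iff it is incident to $\ell$; hence any feasible $J$ is an edge cover of the auxiliary graph $H$ whose vertices are the $L$ leaves of $T$ and whose edges are the links $E$. In particular $|J|$ is at least the minimum edge-cover number of $H$, namely $L-\nu(H)$ (assuming $H$ has no isolated vertex, which we may, as otherwise there is no feasible solution), where $\nu(H)$ is a maximum matching of $H$; so $\mathrm{OPT}\ge L-\nu(H)$. I would sharpen this with a structural lemma on an inclusion-minimal optimum $J^*$: after discarding redundant links, the link graph of $J^*$ on the leaves has a simple shape (its components are paths, or small trees), which is what makes the matching lower bound strong enough to yield the ratio $5/3$.

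\textbf{Algorithm.} Compute a matching $M$ in $H$ that is maximum and, as a secondary objective, covers as many edges of $T$ as possible — an arbitrary maximum matching does not suffice, since a naive variant of the algorithm can be forced to ratio $2$, so $M$ must be tuned by local exchanges. Extend $M$ to an edge cover $C$ of the leaves by adding, for every unmatched leaf, one incident link chosen to cover a longest currently uncovered subpath of $T$. Then contract all edges of $T$ already covered by $C$, run the factor-$2$ algorithm of \cite{RW,FJW} on the residual tree, and add its output to $C$; return $C$.

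\textbf{Analysis and main obstacle.} The returned set has size at most $(L-\nu(H)) + 2\cdot r^*$, where $r^*$ denotes the optimum of the residual instance; since $L-\nu(H)\le\mathrm{OPT}$, the whole proof reduces to showing that $r^*$ is small — intuitively, that with $M$ chosen well the matching-and-cover phase leaves only a residual that $J^*$ can cover almost for free — so that the total is at most $\frac53\mathrm{OPT}$. I expect this coupling to be the crux: one must prove the structural lemma about minimal optima and then an exchange argument showing that $M$ can be chosen to absorb enough of the internal connectivity of $J^*$ for the residual bound to hold, after which the edge-cover inequality closes the estimate at $5/3$. The cactus-to-tree reduction, though standard, must be carried out carefully so as to preserve both the leaf-to-leaf property and the value of the optimum.
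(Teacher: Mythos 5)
Your first step already fails, and the rest of the argument leans on it. For {\sc Leaf-to-Leaf Block-Tree Augmentation} it is not true that a feasible solution is just a set of leaf-to-leaf links covering every edge of $T$: covering all tree edges gives $2$-edge-connectivity, while the problem demands $2$-connectivity. A concrete counterexample is a star with center $v$ and leaves $a,b,c,d$ and links $ab,cd$: both tree edges at each leaf are covered, the graph is $2$-edge-connected, yet $v$ is still a cutnode. So your reduction collapses {\sc Block-Tree Augmentation} to {\sc Leaf-to-Leaf Tree Augmentation}, which is a strictly easier problem (it admits ratio $17/12$); if such a reduction existed, the present theorem would be immediate and much stronger bounds would follow, which is exactly what does not happen. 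The same objection applies to the cactus side: the sets of a crossing family represented by a cactus are cut by \emph{pairs} of edges of a common cycle, and a chord covers such a set only when it crosses that pair appropriately; no path gadget that preserves the optimum is exhibited, and none is known (the special case {\sc Cycle Augmentation} already resists such a reduction and only has ratio $3/2+\eps$). The paper avoids tree covering altogether: it reduces both problems to {\sc Subset Steiner Connected Dominating Set} via the $(R,E,\FF)$-incidence graph, where feasibility means that the chosen links together with the leaves induce a \emph{connected} subgraph -- this is the condition your tree-edge-cover formulation loses.

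Beyond the reduction, the analysis you outline is not actually carried out. The bound you need, that after the matching-plus-cover phase the residual optimum $r^*$ is small enough that $(L-\nu(H))+2r^* \leq \f{5}{3}\,\mathrm{OPT}$, is exactly the crux, and you defer it to an unproved structural lemma and an unspecified exchange argument; a plain maximum matching provably does not suffice (you note ratio $2$ yourself), and no rule for the local exchanges or proof that they achieve the required coupling is given. For comparison, the paper's proof gets $5/3$ by a different and fully elementary route: a two-phase greedy in the incidence graph that first adds links hitting at least three residual leaves, or pairs of adjacent links with disjoint leaf pairs (so at least $3/2$ leaves are removed per added link), and then an inclusion-minimal residual cover, which is a forest and hence has at most $\ell'-1$ edges; the ratio follows from the two lower bounds $\mathrm{OPT} \geq \ell/2$ and $\mathrm{OPT} \geq \ell'-1$, the latter proved by the induction of Lemma~\ref{l:lb}. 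As it stands, your proposal neither establishes a correct reduction nor supplies the key lemma of its own analysis.
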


Better ratios are known for two special cases.
In the {\sc Tree Augmentation} problem the family $\FF$ is laminar, namely, any two sets in 
$\FF$ are disjoint or one contains the other; this problem can be also defined in connectivity terms - 
make a spanning tree $2$-edge-connected by adding a min-size edge set $J \subs E$. 
This problem was vastly studied; see \cite{A,GKZ,KN-TAP,FGKS,N-T} and the references therein 
for additional literature on the {\sc Tree Augmentation} problem.
In the {\sc Leaf-to-Leaf Tree Augmentation} problem, every edge in $E$ connects two leaves of the tree; 
this problem admits ratio $17/12$ \cite{MN}.
The {\sc Cycle Augmentation} problem is a particular case of the {\sc Cactus Augmentation} problem 
when the cactus is a cycle; in this case the leaves are the singleton nodes. 
The {\sc Cycle Augmentation} problem admits ratio $\f{3}{2}+\eps$ \cite{GGAS}; our 
algorithm from Theorem~\ref{t:1} uses some ideas from \cite{GGAS}.

Byrka, Grandoni, and  Ameli \cite{BGA} showed that {\sc Cactus Augmentation} 
admits ratio $2\ln 4- \f{967}{1120}+\eps<191$, breaching the natural $2$ approximation barrier. 
This was extended to {\sc Crossing Family Augmentation} and {\sc Block Tree Augmentation} in \cite{N-2c}. 

In the {\sc Steiner Tree} problem we are given a graph $G=(V,E)$ with edge costs and a set $R \subs V$ of terminals,
and seek a min-cost subtree of $G$ that spans $R$. We prove the following.

\begin{theorem} \label{t:2}
If {\sc Steiner Tree} admits ratio $\al$ then 
{\sc Crossing Family Augmentation} and {\sc Block-Tree Augmentation} admit ratio $1+\ln(4-x)+\eps$, 
where $x$ is the solution to the equation $1+\ln(4-x)=\al+(\al-1)x$.
\end{theorem}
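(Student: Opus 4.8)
We build on the reduction, implicit in \cite{BGA} and extended to the general setting in \cite{N-2c}, that turns {\sc Crossing Family Augmentation}---and, via the same machinery, {\sc Block-Tree Augmentation}---into a Steiner-tree-type problem. Using the cactus representation of the symmetric crossing family $\FF$ by $2$-edge-cuts and $1$-node-cuts \cite{DN}, one builds an auxiliary instance with a terminal set $R$ (essentially the leaves of $\FF$) in which each edge $f\in E$ is a unit-cost \emph{connector} spanning the whole set of terminals it separates in the cactus; more generally one works with ``full components''---small stars of $E$-edges, of cost equal to the number of edges used. This reduction is faithful in that a feasible cover $J$ of $\FF$ of size $k$ induces a connected sub-structure spanning $R$ of cost $\le 2k$, while a connected sub-structure spanning $R$ of cost $C$ induces a feasible cover of size $\le C$; moreover the $q$-restricted-full-components relaxation of the auxiliary instance can be rounded, at the price of a $(1+\eps)$ factor, by \emph{any} $\al$-approximation algorithm for {\sc Steiner Tree} used as a black box. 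A single application thus already yields ratio $2\al+\eps$, which is what we improve.

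The plan is a hybrid of a greedy phase and one black-box {\sc Steiner Tree} call, governed by a density threshold $\tau$. Since an $E$-edge may span many terminals, the natural greedy step---add a sub-structure of minimum \emph{density}, i.e.\ cost divided by the number of currently-separate terminal groups it merges---is itself a small Steiner problem, so I would perform it by calling the $\al$-approximation on the local instance realizing the cheapest merge, and repeat while the attained density stays below $\tau$. Let $J_1$ be the partial cover obtained and let $x$ measure, in units of $\mathrm{opt}$, how much of the problem the greedy phase has absorbed. We then finish with one $\al$-approximate {\sc Steiner Tree} call on the residual instance, producing $J_2$, and output $J_1\cup J_2$; trying all $O(1/\eps)$ reasonable values of $\tau$ and keeping the best output makes the choice of $\tau$ cost nothing. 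All local instances, and the queries to $\FF$ they require, are assembled exactly as in \cite{N-2c}, so the algorithm is polynomial even when $\FF$ is given only implicitly.

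The analysis rests on two bounds on $|J|$, both in terms of the realized $x$, which together force ratio at most $1+\ln(4-x)$. The greedy/logarithmic bound: every greedy step has density below $\tau$, and a feasible cover of size $\le 2\,\mathrm{opt}$ always exists \cite{RW,FJW}, so the standard telescoping ``$\int \mathrm{d}t/t$'' estimate gives $|J|\le\mathrm{opt}\cdot(1+\ln(4-x))$---the constant $4$ being the product of the factor $2$ from that a priori cover and the factor $2$ encoding loss of the reduction, and the term $-x$ recording the potential already removed cheaply before the logarithm starts. The Steiner bound: once the greedy halts, the residual instance has minimum density at least $\tau$, which bounds its Steiner optimum by roughly $\bigl(1-(2/\al-1)x\bigr)\mathrm{opt}$ (each greedy unit having eaten that much residual potential), so running the $\al$-approximation on it and adding the $x\,\mathrm{opt}$ edges of $J_1$ gives $|J|\le\mathrm{opt}\cdot(\al+(\al-1)x)$, the term $(\al-1)x$ being the overhead of not having spent those edges optimally. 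Since both bounds hold for the same $x$, the ratio is at most $\min\{1+\ln(4-x),\ \al+(\al-1)x\}$, whose worst case over the feasible range of $x$ occurs where the two expressions agree, i.e.\ at the solution of $1+\ln(4-x)=\al+(\al-1)x$; this yields ratio $1+\ln(4-x)+\eps$, the $\eps$ absorbing the $q$-restriction and the discretization of $\tau$. {\sc Block-Tree Augmentation} is handled by the identical argument, the $1$-node-cut gadgets being already built into the auxiliary instance.

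The step I expect to be the main obstacle is making this reduction precise and genuinely two-way with only a $(1+\eps)$ overhead: one must certify that a cover of size $k$ is realizable by at most $2k$ connector units---the factor $2$ arising from edges forced to be used on both sides of a cactus cycle and from the $1$-node-cut gadgets of {\sc Block-Tree Augmentation}---and, conversely, that a {\sc Steiner Tree} approximation designed for the graph version can be run on the full-component (hypergraph) instance through the $q$-restricted-components relaxation without extra loss. The other delicate point is the bookkeeping in the two bounds so that the constants come out exactly as $4$ and $\al-1$ rather than something weaker; granting that, the final optimization over $x$ is a short computation that also recovers the value $1.942$ for $\al=\ln 4+\eps$.
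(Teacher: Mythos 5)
Your final balancing step (two bounds in a parameter $x$, worst case where $1+\ln(4-x)=\al+(\al-1)x$) matches the shape of the paper's argument, but neither bound is actually established in your write-up, and the parameter $x$ you balance is not the right one. In the paper, $x=|L|/{\sf opt}$ is an \emph{instance} parameter (number of leaves of $\FF$ over the optimum), not a measure of ``how much the greedy phase has absorbed''. The bound $\al+(\al-1)x$ has nothing to do with a residual instance after a thresholded greedy: it comes from the exact (not factor-$2$-lossy) reduction to {\sc SS-CDS} with the clique property $(\ast)$ and the additive correspondence of Lemma~\ref{l:al}, $|S_J|=|J|-|R|+1$ and $|J_S|=|S|+|R|-1$, applied to a single black-box Steiner tree call with $R=L$. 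Your derivation of this bound from ``the residual instance has minimum density at least $\tau$, which bounds its Steiner optimum by roughly $\bigl(1-(2/\al-1)x\bigr){\sf opt}$'' is unsubstantiated, and I do not see how to make it true with $x$ defined as greedy expenditure. Similarly, the constant $4-x$ in the logarithmic bound is not ``$2\times 2$ reduction loss minus greedy savings'': it comes from a specific choice of the terminal set $R$, namely the endpoints of a $2$-approximate cover under modified costs ($0$ for leaf--leaf edges, $1$ for leaf--nonleaf, $2$ otherwise), which gives $|R|\leq 4{\sf opt}-|L|$ (Lemma~\ref{l:R}); the initial potential $\nu(\emptyset)=|R|-1$ then feeds the standard relative-greedy bound $1+\ln\frac{\nu(\emptyset)}{\sf opt}\leq 1+\ln\bigl(4-\frac{|L|}{\sf opt}\bigr)$. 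With your definition of $x$ the ``standard telescoping'' simply does not produce $1+\ln(4-x)$. Finally, the two algorithms in the paper are run separately and the better output is kept; no hybrid threshold $\tau$ is needed.

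The second genuine gap is algorithmic: the heart of the paper's proof of the $1+\ln(4-x)+\eps$ bound is showing how to find a $(1+\eps)$-approximate \emph{minimum-density} augmenting set in polynomial time. You propose to ``call the $\al$-approximation on the local instance realizing the cheapest merge'', but identifying which terminals to merge cheaply \emph{is} the min-density problem, so this is circular; moreover, a $\rho$-approximate density oracle degrades the greedy guarantee by a factor $\rho$ (as noted after Theorem~\ref{t:ga}), so using $\rho=\al\approx 1.39$ would destroy the claimed ratio. The paper instead shows the minimum density is attained by a subtree, proves the non-standard decomposition Lemma~\ref{l:td} with respect to the set $P$ of branching nodes and nodes adjacent to terminals (so $|P|\leq 2r$), and enumerates all $P'\subs S$ with $|P'|=O(1/\eps)$ together with MST computations in the metric completion. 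You also flag the fidelity of the reduction as the main obstacle; in fact that part is exact and already available from \cite{N-2c}, while the missing density oracle and the correct choice of $R$ (hence of the constant $4-x$) are where the real work lies.
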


Currently, $\al=\ln 4+\eps$ \cite{BGRS}; in this case we have % $x>1.4367$ giving 
ratio $1.942$ for the problems in the theorem. 
This is slightly worse than the ratio $1.91$ of \cite{BGA} (see also \cite{N-2c}), 
but our algorithm is very simple and has the advantage of using {\sc Steiner Tree} approximation as a ``black box''.
E.g., if ratio $\al=1.35$ can be achieved, then we immediately get ratio $1.895 < 1.9$. 

\medskip

We also consider the following problem:

\begin{center} \fbox{\begin{minipage}{0.975\textwidth} \noindent
\underline{{\sc Element Connectivity Augmentation}} \\
{\em Input:} \ An undirected graph $G=(V,E)$, a set $S \subs V$ of terminals, 
and connectivity requirements $\{r(u,v):u,v \in S\}$ on pairs of terminals.  \\
{\em Output:}   A minimum size set $J$ of new edges on $S$ (any edge is allowed and parallel edges are allowed) 
such that the graph $G \cup J$ contains $r(u,v)$ $uv$-paths such that no two of them have an edge or a node in $V \sem S$ in common.
\end{minipage}}\end{center}

A particular case when the graph $G$ is bipartite 
with sides $S$ and $V \sem S$ is known as the {\sc Hypergraph Edge-Connectivity Augmentation} problem;
here $S$ is the set of nodes of the hypergraph and $V \sem S$ is the set of the hyperedges. 
This problem is solvable in polynomial time for uniform requirements when $r(u,v)=k$ for all $u,v \in S$ \cite{BJ} 
(see also \cite{BF} and \cite{BK} for a simpler algorithm and proof), 
and when $r_{\max}=1$, where $r_{\max}$ is the maximum requirement.
See also \cite{F-book,FJ-survey,BK} for additional polynomially solvable cases.
The non-uniform version of the problem is NP-hard even when 
the initial graph $G$ is connected and $r_{\max}=2$ \cite{KCJ}.
The previous best approximation ratio for the general version was $7/4$, and $3/2$ when $r_{\max} =2$ \cite{N}.

In the degree bounded version of the problem 
we also have degree bounds $\{b(v):v \in S\}$ and require that 
$d_J(v) \leq b(v)$ for all $v \in S$, % and $d_J(v)=0$ for all $v \in V \sem S$, 
where $d_J(v)$ is the degree of $v$ w.r.t. $J$.  
% we will {\em not} consider the more general case when there are also degree bounds on nodes in $V \sem S$.
We show that {\sc Element Connectivity Augmentation} admits ratio $3/2$, and 
that this ratio can be achieved also for the degree bounded version with only additive $+1$ degree violation;
a better degree approximation is unlikely, since deciding whether there exists a feasible solution 
is NP-hard even when $r_{\max}=2$ and $b_{\max} =1$ \cite{KCJ}.

\begin{theorem} \label{t:eca}
{\sc Element Connectivity Augmentation} admits approximation ratio $3/2$. 
Moreover, the degree bounded version admits a bicriteria approximation algorithm 
that computes a solution $J$ of size at most $3/2$ times the optimal 
such that $d_J(v) \leq b(v)+1$ for all $v \in S$.
\end{theorem}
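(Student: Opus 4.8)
The plan is to reduce the problem to an abstract edge-covering problem for a skew-supermodular set function, and then to solve that by a Lov\'asz--Mader-style splitting-off procedure adapted to element connectivity, paying the extra factor $3/2$ only at the configurations where an admissible split fails to exist. First I would set up the reduction. For $\empt\ne X\subsetneq S$ let $p(X)=\max\{r(u,v):u\in X,\ v\in S\sem X\}-\kappa_G(X)$, where $\kappa_G(X)$ is the minimum size of an element cut of $G$ that leaves no edge between the part of $X$ and the part of $S\sem X$. It is standard --- see \cite{N}, and the setpair/biset formulation of Frank and Jord\'an --- that $p$ is skew-supermodular, that $p$ and the relevant maxima over it can be evaluated in polynomial time (via max-flow and matroid intersection), and that a set $J$ of new edges on $S$ is feasible iff $d_J(X)\ge p(X)$ for every such $X$. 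So it suffices to $3/2$-approximate the minimum-size edge-cover of a skew-supermodular function presented by such an oracle, and, for the degree-bounded variant, to do this with $d_J(v)\le b(v)+1$; since only skew-supermodularity and the oracle are used, the more general ``covering a skew-supermodular function'' statement of the abstract follows at once, and NP-hardness (already for $r_{\max}=2$, and with $b_{\max}=1$ for deciding feasibility) is \cite{KCJ}.

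The lower bound comes from the covering LP over the complete graph on $S$ with integer multiplicities allowed, namely $\min\{\sum_e x_e:\ \sum_{e\in\delta(X)}x_e\ge p(X)\ \forall X,\ x\ge0\}$, augmented with $\sum_{e\in\delta(v)}x_e\le b(v)$ in the degree-bounded case; its value is at most $\mathrm{opt}$. I would take an extreme optimal solution and, by the usual uncrossing argument for such covering polyhedra together with skew-supermodularity, show it is half-integral. After removing the integral part $J_0=\{e:x_e=1\}$, the remaining half-edges then form a multigraph $H$ with $x_e=\f12$ on its edges and $d_H(X)\ge 2\bar p(X)$ for the residual requirement $\bar p=(p-d_{J_0})^+$, and $H$ can be decomposed into essentially cycle-like pieces along which $\bar p$ must still be covered.

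To round $H$ I would build an auxiliary picture in which the half-edges are paired up and each pair is replaced by a single real edge joining their other endpoints --- a split-off. The heart of the argument is a splitting-off lemma for element connectivity, in the spirit of Mader's and Lov\'asz's edge-connectivity theorems and of the element-connectivity reduction of Chekuri and Korula: an admissible split preserving all element-connectivity requirements between terminals exists, unless the configuration lies in a restricted list of obstacles; at an obstacle one pays at most two real edges for two half-edges instead of one. A charging argument then bounds the number of real edges produced: each cycle-like piece of length $\ell\ge2$ yields at most $\lceil\ell/2\rceil\le\f34\ell$ real edges, obstacles being charged against the extra half-edge they release, so the total is at most $\f34|E(H)|+|J_0|\le\f32\,\mathrm{LP}\le\f32\,\mathrm{opt}$. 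Feasibility is immediate, since splitting-off preserves, and obstacle-handling restores, all the requirements.

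For the degree-bounded version I would keep the degree constraints in the LP and round the half-integral part in an Eulerian fashion --- equivalently, run the splitting-off inside an iterative-rounding loop that discards the degree constraint of a terminal once its fractional $J$-degree has dropped to within $b(v)+1$, a token argument guaranteeing progress --- so that every terminal ends with degree at most $b(v)+1$ while the edge count stays $\le\f32\,\mathrm{opt}$, the LP bound being unchanged and the combinatorial analysis of the previous step still applying; that $+1$ is best possible is precisely the feasibility-hardness of \cite{KCJ}. I expect the main obstacle to be the splitting-off step itself: in the \emph{non-uniform} element-connectivity setting an admissible split genuinely may fail to exist, so the crux is to pin down exactly which configurations are obstructed, to show that each can be resolved by at most one extra edge per half-unit already committed so that the global count stays within $3/2$, and to make all of this coexist with the $b(v)+1$ budget --- in particular, to ensure that resolving one obstacle does not spawn new ones that overrun a degree bound.
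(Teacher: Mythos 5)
Your reduction to covering a symmetric skew-supermodular function is indeed the right first step and matches the paper, but from there your argument has two genuine gaps, and they are exactly at the places where the real work lies. First, your lower bound rests on the claim that an extreme optimal solution of the covering LP $\min\{\sum_e x_e:\ x(\delta(X))\ge p(X),\ x\ge 0\}$ is half-integral ``by the usual uncrossing argument together with skew-supermodularity''. No such half-integrality theorem is available: for weakly (skew-) supermodular functions the uncrossing/laminar-basis argument only yields Jain's guarantee that some edge has $x_e\ge 1/2$ at an extreme point, not that all coordinates are in $\{0,\tfrac12,1\}$, and allowing all edges on $S$ does not rescue the claim. Everything downstream (the decomposition of the half-edges into ``cycle-like pieces'', the $\lceil\ell/2\rceil\le\tfrac34\ell$ accounting, the degree-bounded variant with degree constraints added to the LP) depends on this unproved and in general false premise.

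Second, you yourself flag the splitting-off step as the crux, and it is: in this non-uniform setting an admissible split can fail, and your proposal neither characterizes the obstructed configurations nor proves the charging that keeps the count within $3/2$ (or within the $b(v)+1$ budget). This is precisely what the paper supplies, and by a purely combinatorial route with no LP at all: it works with a minimal $p$-transversal $g$ (so $g(S)$ equals the maximum subpartition bound, giving $\tau(p)\ge t/2$), greedily splits off $(p,g)$-legal pairs, and when no legal pair exists invokes the structural fact that $p_{\max}=g_{\max}=1$ and every $1$- or $2$-element subset of the remaining support $T'$ is still deficient, so a tree (a path, in the degree-bounded case, which immediately yields the $+1$ violation) on $T'$ finishes the job. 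The new ingredient that pushes the ratio from the old $7/4$ down to $3/2$ is an improved lower bound $\tau(p)\ge\tfrac23(t-k)$, proved via the existence of an optimum cover $J$ with $d_J(v)\ge g(v)$ on the support of the minimal transversal and a counting argument over the edges of $J$ meeting $T_g\setminus T'$. Your proposal contains no analogue of this strengthened lower bound, and without it (or a correct LP-based substitute) the $3/2$ bound does not follow.
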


The proof of this theorem is based on a generic algorithm 
for covering a skew-supermodular set function, as is explained in Section~\ref{s:element}.

%%%%%%%%%%%%%%%%%%%%%%%%%%%%%
\section{The leaf-to-leaf case (Theorem~\ref{t:1})}
%%%%%%%%%%%%%%%%%%%%%%%%%%%%%

We prove Theorem~\ref{t:1} for the {\sc Crossing Family Augmentation} problem, 
and later indicate the changes needed to adopt the proof for the {\sc Block-Tree Augmentation} problem.
We need some definition to describe the algorithm.
Let $\FF$ be a set family on $V$.
We say that $A \in \FF$ {\bf separates} $u,v \in V$ if $|A \cap \{u,v\}|=1$;
$u,v$ are {\bf $\FF$-separable} if such $A$ exists and $u,v$ are {\bf $\FF$-inseparable} otherwise.
Similarly, $A$ {\bf separates} edges $f,g$ if one of $f,g$ has both ends in $A$ and the other has no end in $A$;
$f,g$ are {\bf $\FF$-separable} if such $A \in \FF$ exists, and {\bf $\FF$-inseparable} otherwise.
The relation $\{(u,v) \in V \times V: u,v \mbox{ are } \FF\mbox{-inseparable}\}$ is an equivalence,
and we call its equivalence classes {\bf $\FF$-classes}.
W.l.o.g. we will assume that all $\FF$-classes are singletons and that no edge in $E$ has both ends in the same class; 
in particular, the leaves of $\FF$ are singletons, and we denote the leaf set of $\FF$ by $L$.
We will also often abbreviate the notation for singleton sets and write $v,e$ instead of $\{v\},\{e\}$.
Given $J \subs E$, the {\bf residual instance} $((V^J,E^J),\FF^J)$ is defined as follows.
\begin{itemize}
\item
The {\bf residual family} $\FF^J$ of $\FF$ w.r.t. $J$ consists of all members of $\FF$ that are uncovered by the edges in $J$.
It is known that $\FF^J$ is crossing (and symmetric) if $\FF$ is.
\item
$V^J$ is the set of $\FF^J$-classes (w.l.o.g, each of them can be shrunk into a single element).
\item 
$E^J$ is obtained from $E \sem J$ by removing all edges that have both ends in the same $\FF^J$-class.
\end{itemize}
In addition, given a set $R \subs V$ of terminals,
the {\bf residual set of terminals} $R^J$ is the set of $\FF^J$-classes that contain some member of $R$.
For illustration see Fig.~\ref{f:cactus}(a,b,c).

\begin{figure} 
\centering 
\includegraphics{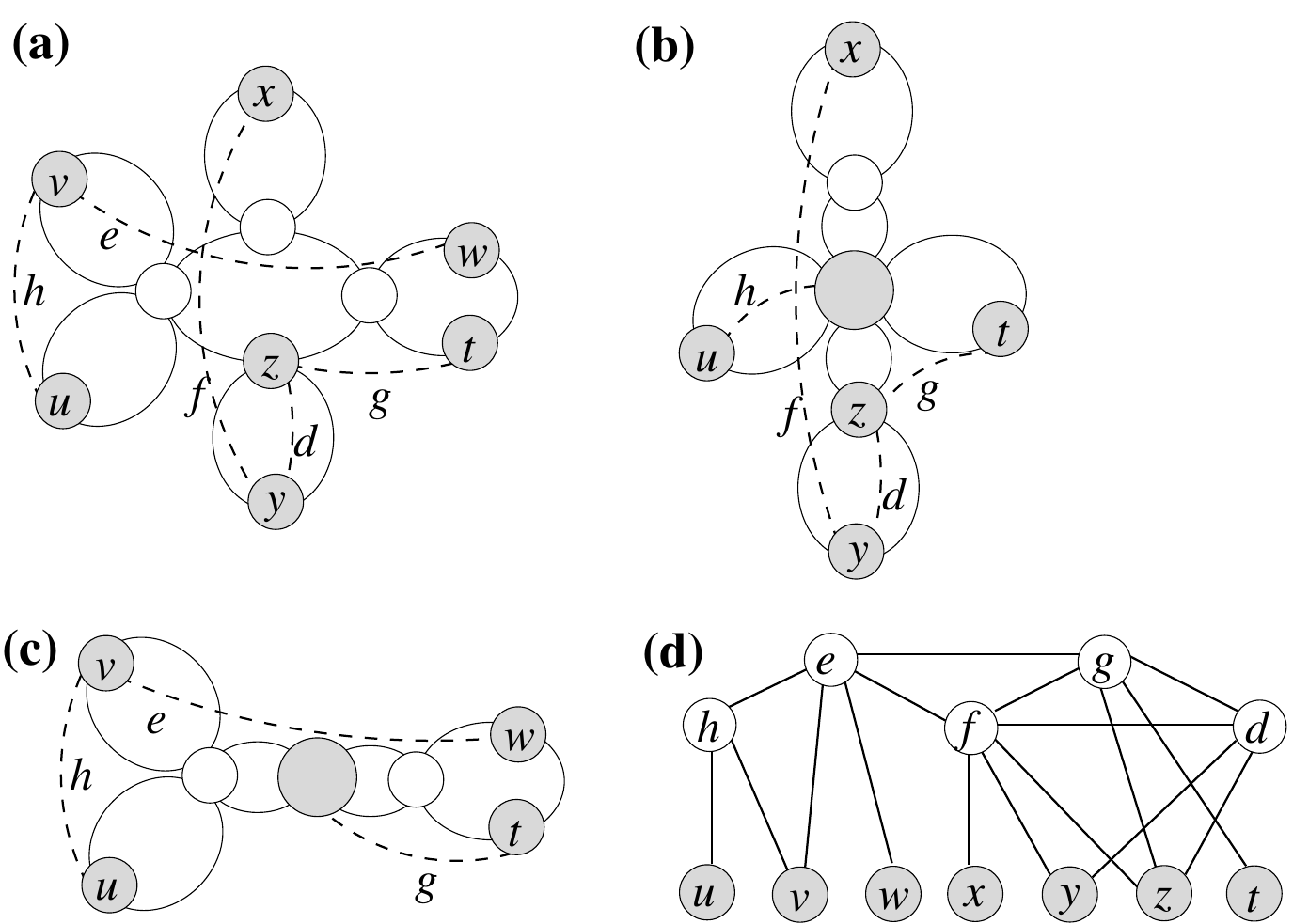}
\caption{Illustration of definitions for a {\sc Crossing Family Augmentation} instance where $\FF$ is represented by a cactus.
Here $A \in \FF$ if and only if $A$ is a connected component obtained 
by removing a pair of edges that belong to the same cycle of the cactus.
The edges in $E$ are shown by dashed arcs and the terminals in $R$ are shown by gray circles.
The cactus of the residual family w.r.t. to a single edge is obtained by ``squeezing'' the cycles 
along the path of cycles between the ends of the edge. 
(a)~The original instance. 
(b)~The residual instance w.r.t. $e$.
(c)~The residual instance w.r.t. $f$.
(d)~The $(R,E,\FF)$-incidence graph of the instance in (a).}
\label{f:cactus}
\end{figure}

\begin{figure} 
\centering 
\includegraphics{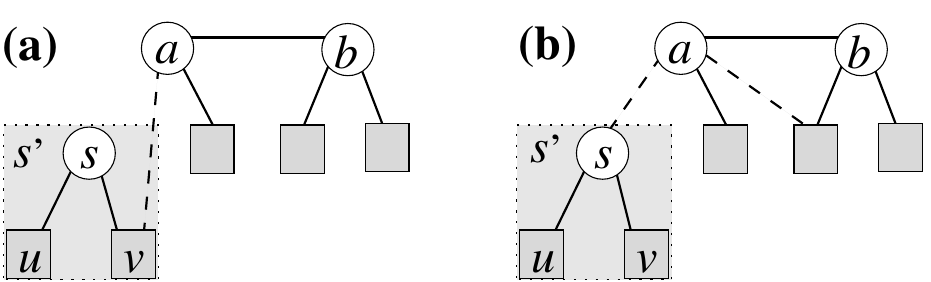}
\caption{Illustration to the proof of Lemma~\ref{l:lb}.}
\label{f:Rv}
\end{figure}

For any edge $e=uv$, there is an $\FF^e$-class that contains both $u$ and $v$; denote this class by $C(\FF,e)$. 
Given a set $R$ of terminals (a subset of $\FF$-classes), 
the {\bf $(R,E,\FF)$-incidence graph} $H=(U,E_H)$ has node set $U=E \cup R$ and edge set 
$$
E_H=\{ee':e,e' \in E \mbox{ are } \FF\mbox{-inseparable}\} \cup \{er:r \in R, e \in E, r \in C(\FF,e)\} \ .
$$

Let $R \subs V$ and let $H$ be the $(R,E,\FF)$-incidence graph.
Note that $R$ is an independent set in $H$.
It was shown in \cite{N-2c} that for $R=L$ being the set of leaves of $\FF$, 
an edge set $J \subs E$ is a feasible solution to {\sc Crossing Family Augmentation} if and only if 
the subgraph $H[J \cup R]$ of  $H$ induced by $J \cup R$ is connected.
The proof in \cite{N-2c} extends to any $R \subs V$ that contains~$L$. 
This implies that {\sc Crossing Family Augmentation} admits an approximation ratio preserving reduction 
to the following problem (see \cite{N-2c,BFGM} for more details).

\begin{center} \fbox{\begin{minipage}{0.98\textwidth}
\underline{\sc Subset Steiner Connected Dominating Set} ({\sc SS-CDS}) \\
{\em Input:} \ \  A  graph $H=(U,E_H)$ and a set $R \subs U$ of independent terminals.  \\
{\em Output:} A min-size node set $S \subs U \sem R$ such that $H[S]$ is connected and $S$ dominates $R$. 
\end{minipage}} \end{center}

Given a {\sc SS-CDS} instance and $s \in S=U \sem R$ let $R(s)=R_H(s)$ 
denote the set of neighbors of $s$ in $H$ that belong to $R$. 
Let ${\sf opt}$ be the optimal solution value of a problem instance at hand.
Before describing the algorithm, we will prove the following lemma.

\begin{lemma} \label{l:lb}
Let $\II=(H,R)$ be a {\sc SS-CDS} instance such that $|R(s)|=2$ for all $s \in S=U \sem R$. 
Then one of the following holds: 
\begin{itemize}
\item[{\em (i)}]
There are adjacent $a,b \in S$ with $R(a) \cap R(b)=\empt$. 
\item[{\em (ii)}]
${\sf opt} \geq |R|-1$.
\end{itemize}
\end{lemma}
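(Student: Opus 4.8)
The plan is to think of the terminals $R$ as the ground set and each $s \in S$ as supplying an unordered pair $R(s) \subseteq R$ with $|R(s)|=2$. Build an auxiliary multigraph $\mathcal{G}$ on vertex set $R$ whose edges are exactly these pairs $R(s)$, one for each $s \in S$. Case (i) is precisely the statement that some two $s$-edges in $\mathcal{G}$ that are "adjacent in $H$" are vertex-disjoint in $\mathcal{G}$; case (ii) is a lower bound on ${\sf opt}$. So assume (i) fails: for every pair $a,b \in S$ that are adjacent in $H$ we have $R(a) \cap R(b) \neq \emptyset$.

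**Exploiting that (i) fails.**

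First I would observe that for any feasible solution $S^*$ to the {\sc SS-CDS} instance, the induced subgraph $H[S^*]$ is connected, so in particular any two vertices of $S^*$ are joined by a path inside $S^*$; walking along such a path and using the failure of (i) on each consecutive pair, one gets that the $R(s)$ for $s$ along the path pairwise-consecutively intersect. The key structural claim I want is: the edges $\{R(s) : s \in S^*\}$ of $\mathcal{G}$, restricted to the vertices of $\mathcal{G}$ they actually touch, form a connected subgraph of $\mathcal{G}$ — because "intersecting consecutively along a path" for pairs means the corresponding $\mathcal{G}$-edges share endpoints consecutively, hence lie in one connected component. Moreover $S^*$ dominates $R$, so every $r \in R$ lies in $R(s)$ for some $s \in S^*$, i.e. every vertex of $\mathcal{G}$ is covered by some chosen edge. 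Combining: the chosen edges $\{R(s):s\in S^*\}$ form a connected spanning subgraph of $\mathcal{G}$ on all of $R$. A connected graph on $|R|$ vertices has at least $|R|-1$ edges, so $|S^*| \geq |R|-1$, giving ${\sf opt} \geq |R|-1$, which is exactly (ii).

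**The main obstacle.**

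The delicate point — and the one I expect to be the real work — is justifying that "consecutive intersection along an $H$-path" actually yields connectivity in $\mathcal{G}$ \emph{and} that the failure of (i) genuinely applies to every consecutive pair on the path. One has to be careful that adjacency in $H$ between $a$ and $b$ is what the hypothesis of (i) refers to (it is: "adjacent $a,b \in S$"), and that a path in $H[S^*]$ uses only $S$-vertices, so every consecutive pair is an $H$-adjacent pair in $S$ to which the negation of (i) applies. A second subtlety: if some $R(s)$ is a pair $\{r,r\}$ collapsing to one vertex (a loop) or if two different $s$ have the same pair $R(s)$ (parallel edges), the counting $|S^*| \geq |R|-1$ still holds since loops and parallel edges only help connectivity while not decreasing the edge count; but I would state this explicitly. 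Finally, I would double-check the degenerate cases $|R| \leq 1$, where (ii) holds trivially. Putting these pieces together, the dichotomy follows: either (i) holds, or else every feasible $S^*$ induces a connected spanning subgraph of $\mathcal{G}$ and hence $|S^*|\ge |R|-1$.
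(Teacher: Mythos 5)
Your proof is correct, but it takes a genuinely different route from the paper. The paper argues by induction on $|R|$: it takes an optimal solution tree, picks a deepest non-terminal $s$ whose two terminal neighbours form $R(s)$, contracts $R(s)$ into a single new terminal, and then does a small case analysis to show that if property (i) held in the contracted instance it would already hold in the original one, so the induction hypothesis yields ${\sf opt}\geq |R|-1$. You instead give a direct, non-inductive counting argument: view each $s\in S$ as the edge $R(s)$ of an auxiliary multigraph on vertex set $R$; for any feasible solution $S^*$, connectivity of $H[S^*]$ together with the negation of (i) (applied to consecutive, $H$-adjacent pairs of $S$-vertices along paths inside $S^*$) forces all edges $\{R(s):s\in S^*\}$ into one component, and domination makes them span $R$, so $|S^*|\geq |R|-1$. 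The steps you flag as delicate are handled correctly: the negation of (i) does apply to every consecutive pair on a path in $H[S^*]$ since such a path uses only $S$-vertices, loops cannot occur because $|R(s)|=2$ forces two distinct terminals, and parallel edges only waste edges, so the bound ``connected spanning multigraph on $|R|$ vertices has at least $|R|-1$ edges'' still gives the claim. Your argument is arguably simpler and slightly stronger (it bounds every feasible solution, not just an optimal one, and avoids the paper's contraction step and its case analysis transferring (i) back to the original instance); the paper's inductive formulation has the advantage of matching the residual-instance machinery it uses elsewhere, but both establish exactly the dichotomy of the lemma.
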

\begin{proof}
Assume that (i) does not hold for $\II$; we will prove that then (ii) holds.
The proof is by induction on $|R|$. In the base case $|R|=2$ (ii) holds. 
Assume that the statement is true for $|R|-1 \geq 2$.
Let $T$ be an optimal solution tree and $S$ the set of non-terminals in $T$.
Root $T$ at some node and let $s \in S$ be a non-terminal farthest from the root. 
The children of $s$ are terminal leaves, and assume w.l.o.g. that $R(s)=\{u,v\}$ is the set of children of $s$;
if $s$ has just one child in $T$, then it has another terminal neighbor in $H$, that can be attached to $s$.

Consider the residual instance $\II'=(G'=(V',E'),R')$ and the tree $T'$ obtained by contracting $R(s)$ into the new terminal $s'$,
and deleting any $z \in U \sem (R+s)$ with $R(z)=R(s)$. 
Then $|R'|=|R|-1$, $|R'(z)|=2$ for all $z \in R'$, $T'$ is an optimal solution for $\II'$, and $S'=S-s$ is the set of non-terminals of $T'$. 

If (i) does not hold for the new instance $\II'$ then (ii) holds for $\II'$,  by the induction hypothesis.
Then $|S|=|S'|+1 \geq (|R'|-1)+1 = |R|-1$, and we get that (ii) holds for $\II$.
Assume henceforth that (i) holds for $\II'$. 
We obtain a contradiction by showing that then (i) holds for $\II$.
Let $a,b \in V' \sem R'$ be such that $R'(a) \cap R'(b)=\empt$, see Fig.~\ref{f:Rv}.
If $s' \notin R'(a) \cup R'(b)$ then clearly (ii) holds for $\II$.
Otherwise, if say $s' \in R'(a)$, then we have two cases.
If one of $u,v$, say $v$, is a neighbor of $a$ in $G$ (see Fig.~\ref{f:Rv}(a)) then  
$R(a) \cap R(b)=\empt$. Otherwise (see Fig.~\ref{f:Rv}(a)), $R(a) \cap R(s)=\empt$.
In both cases, we obtain a contradiction to the assumption that (i) does not hold for $\II$.
\end{proof}

We also need the following known lemma.

\begin{lemma} \label{l:minimal}
Any inclusion minimal cover $J$ of a set family $\FF$ is a forest.
\end{lemma}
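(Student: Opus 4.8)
The plan is to argue by contradiction via the standard parity obstruction for cycles. Suppose $J$ is not a forest, so the graph $(V,J)$ contains a cycle $C$. First I would exploit inclusion-minimality to extract a \emph{private witness} for each edge: for every $e \in J$, the set $J \sem \{e\}$ fails to cover $\FF$, hence there is a member $A_e \in \FF$ that is covered by $e$ but by no other edge of $J$ (in particular, $e$ has exactly one end in $A_e$ and every other edge of $J$ has either two ends or no end in $A_e$).

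The key step is a parity observation: for \emph{any} set $A \subs V$ and any cycle $C$ in $(V,J)$, the number of edges of $C$ that cover $A$, i.e. that have exactly one end in $A$, is even. This is immediate by tracing the closed walk around $C$ and counting how many times the indicator of membership in $A$ switches value — a closed walk flips it an even number of times. Now pick any edge $e$ lying on the cycle $C$ and apply this with $A = A_e$. Since $C \subs J$ and $A_e$ is covered only by $e$ among all edges of $J$, the edge $e$ is the \emph{unique} edge of $C$ covering $A_e$; thus the number of edges of $C$ covering $A_e$ equals $1$, contradicting the evenness established above. Hence $(V,J)$ is acyclic, i.e. $J$ is a forest.

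A couple of routine points can be dispatched along the way: a self-loop covers no set (both or neither of its ends lie in any $A$), and two parallel edges cover exactly the same subfamily of $\FF$, so a minimal cover contains neither; therefore ``cycle'' may be read as an ordinary simple cycle of length at least $3$, although the parity argument does not rely on this. I do not expect a genuine obstacle here: the only thing that needs care is stating and justifying the parity lemma cleanly, and the rest follows directly from the definition of an inclusion-minimal cover.
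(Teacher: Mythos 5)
Your proposal is correct and rests on the same key fact as the paper's proof: an edge $e$ of a cycle $C$ can never be the unique edge of $C$ crossing a set $A$, which the paper phrases as ``the path $C\sem\{e\}$ between the ends of $e$ must also cover any $A$ covered by $e$'' and you phrase as the even-parity of cycle--cut crossings, combined with a private witness set obtained from minimality. The two arguments are essentially the same, differing only in bookkeeping (you derive a contradiction from the witness set, the paper directly exhibits $e$ as redundant).
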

\begin{proof}
Suppose to the contrary that $J$ contains a cycle $C$. 
% Let $e=uv \in C$ and let $\FF'=\{A \in \FF: |A \cap \{u,v\}|=1\}$ be the the family of the sets in $\FF$ that are covered by $e$.
Since $P=C \sem \{e\}$ is a $uv$-path, then for any $A$ covered by $e$, there is $e' \in P$ that covers $A$.
This implies that $J \sem \{e\}$ also covers $\FF$, contradicting the minimality of $J$. 
\end{proof}

The algorithm starts with a partial solution $J=\empt$  and has two phases.
Phase~1 consists of iterations. 
At the beginning of  each iteration, construct the $(E,R^J,\FF^J)$-incidence graph $H^J$, 
where initially $R$ is the set of leaves of $\FF$. Then, do one of the following:
\begin{enumerate}
\item
If $H^J$ has a node $e \in E$ with $|R^J(e)| \geq 3$, then add $e$ to $J$.  
\item
Else, if there are $e,f \in E$ with $R^J(e) \cap R^J(f)=\empt$, then add both $e,f$ to $J$.
\end{enumerate}
If none of the above two cases occurs, then we apply Phase~2,
in which we add to $J$ an inclusion minimal cover of $\FF^J$; 
note that all edges in $E^J$ have both endnodes in $R^J$.
A more formal description is given in Algorithm~\ref{alg:RS}.

We show that the algorithm achieves ratio $5/3$. Note that:
\begin{itemize}
\item
Adding an edge $e$ as in step~4 reduces the number of terminals by at least $2$. 
\item
Adding an edge pair $e,f$ as in step~5 reduces the number of terminals by at least $3$. 
\end{itemize}

\medskip 

\begin{algorithm}[H]
\caption{$(G=(V,E),\FF,R)$}  
\label{alg:RS}
$J \gets \empt$ \\
\Repeat{\em no edge $e$ or an edge pair $e,f$ as above exists}
{
let $H^J$ be the $(E^J,R^J,\FF^J)$-incidence graph \\
{\bf if} $H^J$ has a node $e \in E$ with $|R^J(e)| \geq 3$ then {\bf do} $J \gets J \cup \{e\}$ \\
{\bf else if} $H^J$ has node pair $e,f \in E$ with $R^J(e) \cap R^J(f)=\empt$ then {\bf do} $J \gets J \cup\{e,f\}$
}
find an inclusion minimal $\FF^J$-cover and add it to $J$ \\
\Return{$J$}
\end{algorithm}

\medskip

Hence the reduction in the number of terminals per added edge is at least $3/2$.
Let $\ell=|L|$ be the initial number of terminals. % (the set $L$ of leaves of $\FF$ is the initial set $R$ of terminals). 
Let $\ell'=|R^J|$ be the number of terminals at the end of Phase~1 (steps 2-6 in Algorithm~\ref{alg:RS}).
Let $k$ be the number of edges added during Phase~1.
Then $\ell' \leq \ell-\f{3}{2}k$, hence $k \leq \f{2}{3}(\ell-\ell')$. 
The number of edges added at the second phase is at most $\ell'-1$, by Lemma~\ref{l:minimal};
note that every edge in $E^J$ has both ends in $R^J$ and that $|R^J|=\ell'$.
On the other hand, ${\sf opt} \geq \f{\ell}{2}$, and ${\sf opt} \geq \ell'-1$, by part (ii) of Lemma~\ref{l:lb}.
Summarizing, we have the following:
\begin{itemize}
\item
The solution size is at most $k+\ell'-1 \leq \f{2}{3}(\ell-\ell')+\ell'-1=(2\ell+\ell'-3)/3$.
\item
${\sf opt} \geq \ell/2$ and ${\sf opt} \geq \ell'-1$.
\end{itemize}
Thus the approximation ratio is bounded by % (the worse case is when $\ell/2=\ell'-1$):
$
\f{(2\ell+\ell'-3)/3}{\max\{\ell/2,\ell'-1\}} % \leq \f{(5\ell'-7)/3}{\ell'-1}<\f{5}{3}
$.
If $\ell/2 \geq \ell'-1$ then
$$
\f{(2\ell+\ell'-3)/3}{\max\{\ell/2,\ell'-1\}} \leq \f{(2\ell+(\ell/2+1)-3)/3}{\ell/2} =\f{(5\ell/2-2)/3}{\ell/2} < \f{5}{3} \ .
$$
Else, $\ell/2 < \ell'-1$, and then
$$
\f{(2\ell+\ell'-3)/3}{\max\{\ell/2,\ell'-1\}} < \f{(4(\ell'-1)+\ell'-3)/3}{\ell'-1} =\f{(5\ell'-7)/3}{\ell'-1} < \f{5}{3} \ .
$$
In both cases the ratio is bounded by $5/3$.

\medskip

We now adjust the proof to the {\sc Block-Tree Augmentation} problem. 
Let $G=(V,E)$ be a connected graph.
A node $v$ is a {\bf cutnode} of $G$ if $G \sem \{v\}$ is disconnected; 
an inclusion maximal node subset whose induced subgraph is connected and has no cutnodes is a {\bf block} of $G$;
equivalently, $B$ is a block if it is the node set of an inclusion maximal $2$-connected subgraph or of a bridge. 
The {\bf block-tree} $T$ of $G$ has node set $C_G \cup {\cal B}_G$, 
where $C_G$ is the set of cutnodes of $G$ and ${\cal B}_G$ is the set of blocks of $G$;
$T$ has an edge for each pair of a block and a cutnode that belongs to that block.
It is known that every $v \in V \sem C_G$ belongs to a unique block, and that $T$ is a tree.
The {\bf block-tree mapping} $\psi:V \rightarrow C_G \cup {\cal B}_G$ of $G$ is defined by $\psi(v)=v$ is $v \in C_G$ 
and $\psi(v)$ is the block that contains $v$ if $v \in V \sem C_G$.
% Given a {\sc $2$-Connectivity Augmentation} instance $(G_0,E')$, the equivalent {\sc Block-Tree Augmentation} instance 
% is $(T,E)$ where $T$ is the block tree of $G_0$ and $E=\{\psi(u)\psi(v):uv \in E\}$.

Given a {\sc Block-Tree Augmentation} instance $(T=(V,E_T),E)$ and $J \subs E$, 
the {\bf residual instance} $(T^J=(V^J,E_T^J),E^J)$ is defined as follows.
\begin{itemize}
\item
$T^J$ is the block tree of $T \cup J$.
\item 
$E^J=\{\psi(u)\psi(v):uv \in E \sem J, \psi(u) \neq \psi(v)\}$, where $\psi$ is the the block-tree mapping of $T \cup J$.  
\end{itemize}
For a set $R \subs V$ of terminals,
the {\bf residual set of terminals} is $R^J=\psi(R) = \cup_{r \in R} \psi(r)$. 
For an edge $e=uv$ let $T_e$ denote the unique $uv$-path in $T$.
We say that $e,f \in E$ are {\bf $T$-inseparable} if the paths $T_e,T_f$ have an edge in common.
The {\bf $(R,E,T)$-incidence graph} $H=(U,E_H)$ has node set $U=E \cup R$ and edge set 
$$
E_H=\{ef:e,f \in E \mbox{ are } T\mbox{-inseparable}\} \cup \{er:r \in R, e \in E, r \in T_e\} \ .
$$
It was shown in \cite{N-2c} that for $R=L$ being the set of leaves of $\FF$, 
an edge set $J \subs E$ is a feasible solution to {\sc Block-Tree Augmentation} if and only if 
the subgraph $H[J \cup R]$ of  $H$ induced by $J \cup R$ is connected.
The proof in \cite{N-2c} extends to any $R \subs V$ that contains~$L$. 
This implies that {\sc Crossing Family Augmentation} admits an approximation ratio preserving reduction 
to {\sc SS-CDS}, see \cite{N-2c} for details.
Lemma~\ref{l:minimal} also extends to this case, as it is known that 
an if $J$ is an inclusion minimal edge set whose addition makes a connected graph $2$-connected, then $J$ is a forest.

With these definitions and facts, the rest of the proof for the {\sc Block-Tree Augmentation} 
coincides with the proof given for {\sc Crossing Family Augmentation}, concluding the proof of Theorem~\ref{t:1}.

\section{The general case (Theorem~\ref{t:2})}

Recall that each of the problems {\sc Crossing Family Augmentation} and {\sc Block-Tree Augmentation} 
admits an approximation ratio preserving reduction 
to the {\sc SS-CDS} problem with $R=L$ being the set of terminals.
The {\sc SS-CDS} instances that arise from this reduction have the following property, see \cite{BGA,N-2c}: 
$$
(\ast) \ \ \ \mbox{The neighbors of every $r \in R$ induce a clique.}
$$
% \begin{itemize}
% \item[(A)] The neighbors of every $r \in R$ induce a clique.
% \item[(B)] Any $v \in V \sem R$ has at most $2$ neighbors in $R$.
% \end{itemize}

In fact, {SS-CDS} with property $(\ast)$ is equivalent to the 
{\sc Node Weighted Steiner Tree} problem with property $(\ast)$ with unit node weights for non-terminals (the terminals have weight zero).
Clearly, any {\sc SS-CDS} solution is a feasible {\sc Node Weighted Steiner Tree} solution;
for the other direction, note that if property $(\ast)$ holds, then the set of non-terminals 
in any feasible {\sc Node Weighted Steiner Tree} solution is a feasible {\sc SS-CDS} solution. 
The relation to the ordinary {\sc Steiner Tree} problem is given in following lemma. % as also observed in \cite{BGA}. 
% (a proof is provided for completeness of exposition):

\begin{lemma} [\cite{BGA}] \label{l:al}
Let $S$ be a {\sc SS-CDS} solution 
and $T=(U,J)$ a {\sc Steiner Tree} solution on  instance $(G,R)$ with unit edge costs.
Then:
\begin{itemize}
\item[{\em (i)}]
If $(\ast)$ holds then $T$ can be converted into a {\sc SS-CDS} solution $S_J$ with $|S_J|=|J|-|R|+1$.  
\item[{\em (ii)}]
$S$ can be converted into a {\sc Steiner Tree} solution $T_S=(U_S,J_S)$ with $|J_S|=|S|+|R|-1$.
\end{itemize}
\end{lemma}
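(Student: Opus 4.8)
The plan is to prove Lemma~\ref{l:al} by giving two explicit conversions, one in each direction, and verifying the size bookkeeping in each case.

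\medskip

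\textbf{Part (i).} Assume property $(\ast)$ holds and let $T=(U,J)$ be a Steiner tree with unit edge costs spanning $R$; we may assume $T$ is inclusion-minimal, so every leaf of $T$ is a terminal. The natural candidate for $S_J$ is the set of non-terminal nodes appearing in $T$. First I would check feasibility: since $T$ is connected and spans $R$, every $r \in R$ has a neighbor in $T$, and because $r$ is a terminal (hence not in $S_J$), that neighbor is a non-terminal, so $S_J$ dominates $R$. For connectivity of $H[S_J]$, the key point is property $(\ast)$: consider the tree $T$ and suppress each terminal $r$ by connecting all its $T$-neighbors into a clique; since those neighbors induce a clique in $H$, the resulting graph on $S_J$ stays connected, and it is a subgraph of $H[S_J]$. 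So $H[S_J]$ is connected and $S_J$ is a valid SS-CDS solution. For the count, $T$ is a tree with vertex set of size $|J|+1$; since every leaf is a terminal and $R$ is independent in $H$ (hence in $T$), removing the terminals and re-routing leaves a structure whose size is $|J|+1-|R|$, i.e. $|S_J| = |J|-|R|+1$. (More carefully: $R$ is an independent set in the tree $T$, so $T$ has $|J|+1 = |S_J| + |R|$ vertices, giving $|S_J| = |J|+1-|R|$.)

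\medskip

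\textbf{Part (ii).} Let $S$ be an SS-CDS solution, so $H[S]$ is connected and $S$ dominates $R$. Take a spanning tree of $H[S]$; it has $|S|-1$ edges and vertex set $S$. Now attach each terminal $r \in R$: since $S$ dominates $R$, pick one neighbor of $r$ in $S$ and add the edge from $r$ to it. This adds $|R|$ new edges and keeps the graph a tree (each $r$ becomes a leaf), yielding a subtree $T_S=(U_S,J_S)$ of $H$ that spans $R$ with $|J_S| = (|S|-1)+|R| = |S|+|R|-1$. Note this direction does not require $(\ast)$.

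\medskip

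\textbf{Main obstacle.} The only genuinely delicate point is part (i): showing $H[S_J]$ is connected after discarding the terminals from $T$. Without $(\ast)$ this can fail, because two non-terminal neighbors of a terminal $r$ in $T$ need not be adjacent in $H$, so deleting $r$ may disconnect them. Property $(\ast)$ is exactly what is needed to ``short-cut'' through each terminal. I would spell this out by processing the terminals of $T$ one at a time: when $r$ has $T$-neighbors $w_1,\dots,w_t$ (all non-terminals since $T$ is minimal), replace the star at $r$ by any spanning tree on $\{w_1,\dots,w_t\}$ using $H$-edges, which exist by $(\ast)$; connectivity of the whole structure is preserved at each step, and after all terminals are processed we are left with a connected subgraph of $H[S_J]$. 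The rest is routine counting.
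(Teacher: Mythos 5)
Your proof is correct and follows essentially the same route as the paper: in (i) both arguments use property $(\ast)$ to short-cut through each terminal (the paper reroutes so that $R$ becomes the leaf set and takes the non-leaf nodes, you suppress terminals and take the non-terminals directly, which is the same idea), and (ii) is the identical construction. One tiny slip: the reason the $T$-neighbors of a terminal are non-terminals is the independence of $R$ in $H$ (which you do invoke for the counting), not the inclusion-minimality of $T$.
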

\begin{proof}
We prove (i). 
Any {\sc Steiner Tree} solution $T'=(U',J')$ can be converted into a solution 
$T=(U,J)$ such that $|J|=|J'|$ and $R$ is the leaf set of $T'$. For this, for each $r \in R$ that is not a leaf of $T'$,
among the edges incident to $r$ in $T'$, choose one and replace the other edges by a tree on the neighbors of $r$;
this is possible by $(\ast)$.
The non-leaf nodes of such $T$ form a a {\sc SS-CDS} as required.
For (ii), taking a tree on $S$ and for each $r \in R$ adding an edge from $r$ to $S$
gives a {\sc Steiner Tree} solution as required.
\end{proof}

Let $J^*$ be an optimal and $J$ an $\al$-approximate {\sc Steiner Tree} solutions.
Let $S_J,S^*$ be {\sc SS-CDS} solutions, where $S_J$ is derived from $J$ and $S^*$ is an optimal one. Then 
$$
|S_J| +R-1= |J| \leq \al|J^*| \leq \al |J_{S^*}| = \al(|S^*|-1+|R|) = \al|S^*|+\al(|R|-1) \ .
$$
This implies that if {\sc Steiner Tree} admits ratio $\al$ then {\sc SS-CDS} with property $(\ast)$ 
admits a polynomial time algorithm that computes a solution $S$ of size $|S| \leq \al{\sf opt}+(\al-1)|L|$ 
and achieves ratio $\al+(\al-1)\f{|L|}{\sf opt} = \al+(\al-1)x$, where $x=\f{|L|}{\sf opt}$, $0 < x \leq 2$.
We will prove the following.

\begin{theorem} \label{t:3}
{\sc Crossing Family Augmentation} and {\sc Block-Tree Augmentation} admit ratio $1+\ln \left(4-\f{|L|}{\sf opt} \right)+\eps$. 
\end{theorem}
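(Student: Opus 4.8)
The plan is to combine the two known algorithmic handles on {\sc SS-CDS} with property $(\ast)$—the leaf-to-leaf $5/3$-approximation behind Theorem~\ref{t:1} and the {\sc Steiner Tree}-based algorithm derived just above—and run whichever one is better for the given ratio $x=|L|/{\sf opt}$. Recall from the discussion preceding the statement that the {\sc Steiner Tree} reduction yields, for any $\al$-approximation, a {\sc SS-CDS} solution of size at most $\al\,{\sf opt}+(\al-1)|L|$, i.e.\ ratio $\al+(\al-1)x$; in particular, taking $\al=1+\ln 4+\eps$ via \cite{BGRS} (or the sharper LP-relative bound used in the greedy/LMP analyses) gives a candidate bound. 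The first step is therefore to state the two guarantees side by side as functions of $x$.

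The second step is to replace the ``black-box'' {\sc Steiner Tree} call by a greedy / Lagrangian-type algorithm whose performance degrades gracefully as a function of how much of the instance consists of terminals. The key quantitative input I would use is the standard bound for greedy covering: an iteration that contracts a component spanning $t$ terminals can be charged so that the total cost is at most $H(\cdot)$ times optimum, but—and this is the point—once the residual instance has only $\ell'$ terminals left, we may stop the Steiner phase and finish with an inclusion-minimal cover, which by Lemma~\ref{l:minimal} costs at most $\ell'-1$. Running the greedy Steiner-style merging until the number of surviving terminals drops from $|L|$ to some threshold $\ell'$, and invoking Lemma~\ref{l:lb}(ii) to know ${\sf opt}\ge \ell'-1$ whenever case (i) is unavailable, lets us write the solution size as (greedy cost to reduce $|L|\to\ell'$) $+\,(\ell'-1)$, and the greedy cost is bounded by something like $(\ln(|L|/\ell')+O(1))\cdot{\sf opt}$ plus the terminal-reduction credit. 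Optimizing the stopping threshold $\ell'$ against ${\sf opt}$ and $|L|$ is the routine calculus step; it is exactly what produces the constant $4-|L|/{\sf opt}=4-x$ inside the logarithm, because the worst case balances $\ln$ of the ratio of terminal counts against the $x\le 2$ slack coming from ${\sf opt}\ge |L|/2$.

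The third step is to verify that property $(\ast)$ is preserved throughout, so that the interchange between {\sc SS-CDS} and {\sc Node-Weighted Steiner Tree} (hence ordinary {\sc Steiner Tree}, via Lemma~\ref{l:al}) remains valid in every residual instance—this should follow because contracting the two terminals of a tight set $R(s)$ and deleting parallel non-terminals keeps cliques as cliques, exactly as in the proof of Lemma~\ref{l:lb}. Then one plugs in $\al=\ln 4+\eps$ in the end to recover the numeric $1.942$ promised after Theorem~\ref{t:2}; but stated in the form of Theorem~\ref{t:3} the $+\eps$ simply absorbs the $(1+o(1))$ losses from the approximate Steiner solver and from the greedy $H$-bound.

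The main obstacle I expect is the analysis of the greedy/Lagrangian Steiner phase in a way that is \emph{parametrized by the residual terminal count} rather than giving only a global $\ln 4\cdot{\sf opt}$ bound: one needs the ``partial'' covering guarantee—cost to drive the number of active terminals from $\ell$ down to $\ell'$ is at most $\bigl(1+\ln\tfrac{\ell}{\ell'}\bigr)\,{\sf opt}$ (up to $\eps$)—and to combine it cleanly with the $\ell'-1$ finishing cost and the lower bounds ${\sf opt}\ge \ell/2$, ${\sf opt}\ge \ell'-1$. Getting the constant to come out as $4-x$ and not something weaker is where the care is needed; everything after that is the same two-case ($x\le$ threshold vs.\ $x>$ threshold) optimization already rehearsed in the proof of Theorem~\ref{t:1}.
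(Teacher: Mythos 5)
There is a genuine gap: your sketch never identifies where the constant $4-x$ actually comes from, and the balancing you propose does not produce it. In the paper's proof the logarithm's argument is the initial potential of a relative-greedy scheme, and that potential is \emph{not} the leaf count $|L|$: one first enlarges the terminal set to a set $R\supseteq L$ consisting of the endpoints of a $2$-approximate cover under the cost function that charges $0$ for leaf-to-leaf edges, $1$ for edges with one end in $L$, and $2$ otherwise (Lemma~\ref{l:R}), which gives $|R|\leq 4{\sf opt}-|L|$. The greedy potential is $\nu(J)=|R^J|-1$ for this enlarged $R$, the finishing phase is charged against $|R^J|-1$ because every edge of the auxiliary cover has both ends in $R$ (Lemma~\ref{l:fe}), and Theorem~\ref{t:ga} then yields $1+\ln\frac{\nu(\emptyset)}{\sf opt}=1+\ln\bigl(4-\frac{|L|}{\sf opt}\bigr)$. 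Your plan instead tracks the residual count $\ell'$ of \emph{leaves}, finishes with an inclusion-minimal cover ``of size $\ell'-1$'', and invokes Lemma~\ref{l:lb}(ii) for ${\sf opt}\geq \ell'-1$. Neither step is valid in the general setting: the $\ell'-1$ bound on the finishing cost needs every residual edge to have both ends among the counted terminals, which holds in the leaf-to-leaf case but not here (this is exactly what the enlargement to $R$ fixes), and Lemma~\ref{l:lb}(ii) only applies to instances where $|R(s)|=2$ for every non-terminal and alternative (i) fails, i.e.\ at the end of Phase~1 of the leaf-to-leaf algorithm, not after an arbitrary greedy Steiner phase. Moreover, if you actually optimize the stopping threshold with the bounds you state (${\sf opt}\geq \ell/2$, finish cost $\ell'-1$, greedy cost $\approx(1+\ln(\ell/\ell')){\sf opt}$), the best you get is roughly $2+\ln x$, which is strictly worse than $1+\ln(4-x)$; the $4-x$ cannot be recovered by that calculus.

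A second missing ingredient is algorithmic: you assume ``the standard bound for greedy covering'' as a black box, but to run the relative greedy one must find a $(1+\eps)$-approximate minimum-density augmenting set in polynomial time, and this is the technical heart of the paper's proof. Standard tree-decomposition arguments fail here because a subtree of a minimum-density tree can have arbitrarily larger density (the star and path examples in the paper), so the paper proves a non-standard decomposition (Lemma~\ref{l:td}) with respect to the node set $P=P_1\cup P_2$ (nodes of degree $\geq 3$ and nodes with a terminal neighbor, so $|P|\leq 2r$), and then enumerates subsets $P'$ with $|P'|\in[k,3k]$ and takes MSTs in the metric completion. Without this (or some substitute) your ``partial greedy guarantee'' is not realizable, so the proposal as written does not yield the claimed ratio. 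Finally, the opening step of running the leaf-to-leaf $5/3$ algorithm alongside the Steiner-based one is a confusion of Theorem~\ref{t:2} with Theorem~\ref{t:3}: the $5/3$ algorithm applies only to leaf-to-leaf instances, and the combination that produces Theorem~\ref{t:2} is between the black-box bound $\al+(\al-1)x$ and the bound of Theorem~\ref{t:3}, which must be proved on its own.
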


From Lemma \ref{l:al} and Theorem~\ref{t:3} it follows that we can achieve ratio 
$$
\max\left\{\al+(\al-1)x,1+\ln \left(4-x\right)\right\}+\eps  \ \mbox{ where } \ x=\f{|L|}{\sf opt} \ .
$$
The worse case is when these two ratios are equal, which gives the Theorem~\ref{t:2} ratio.
In the case $\al=\ln 4+\eps$ \cite{BGRS}, we have 
$x \approx 1.4367$, so $L \approx 1.4367 {\sf opt}$ and  ${\sf opt} \approx 0.69 L$.
The ratio in this case is $1+\ln(4-x)+\eps < 1.942$.

% $$3-\f{2}{3}x=\al+(\al-1)x \ \ \ \ \ 3-\al=(\al-1/3)x \ \ \ \ \ x=\f{3-\al}{\al-1/3}=\f{9-3\al}{3\al-1}$$
% $$3-\f{2}{3}x=3-\f{2}{3}\f{9-3\al}{3\al-1}=3-\f{6-2\al}{3\al-1}= \f{9\al-3-6+2\al}{3\al-1}=\f{11\al-9}{3\al-1}$$

\section{Proof of Theorem~\ref{t:3}}

% We need some definitions and known facts to prove Lemma~\ref{l:3}. 
A set function $f$ is 
{\bf increasing} if $f(A) \leq f(B)$ whenever $A \subs B$; $f$ is {\bf decreasing} if $-f$ is increasing, and $f$ 
is {\bf sub-additive} if $f(A \cup B) \leq f(A)+f(B)$ for any subsets $A,B$ of the ground-set.
Let us consider the following algorithmic problem:

\begin{center} \fbox{\begin{minipage}{0.96\textwidth} \noindent
\underline{{\sc Min-Covering}} \\
{\em Input:} \ 
Non-negative set functions $\nu,\tau$ on subsets of a ground-set $U$ such that 
$\nu$ is decreasing, $\tau$ is sub-additive, and $\tau(\empt)=0$. \\
{\em Output:}
$A \subs U$ such that $\nu(A)+\tau(A)$ is minimal.
\end{minipage}}\end{center} 

We call $\nu$ the {\bf potential} and $\tau$ the {\bf payment}.
The idea behind this interpretation and the subsequent greedy algorithm is as follows.
Given an optimization problem, the potential $\nu(A)$ is the (bound on the) value of some ``simple'' augmenting feasible solution for $A$. 
We start with an empty set solution, and iteratively try to decrease the potential by adding a set 
$B \subs U \sem A$ of minimum ``density'' -- the price paid for a unit of the potential. 
The algorithm terminates when the price $\geq 1$, since then we gain nothing from adding $B$ to $A$.
The ratio of such an algorithm is bounded by $1+\ln \f{\nu(\emptyset)}{\sf opt}$
(assuming that during each iteration a minimum density set  can be found in polynomial time).
So essentially, the greedy algorithm converts ratio $\al=\f{\nu(\emptyset)}{\sf opt}$ into ratio $1+\ln \al$.
% However, sometimes a tricky definition of the potential and the payment functions may lead to a smaller ratio.

Fix an optimal solution $A^*$.
Let $\nu^*=\nu(A^*)$, $\tau^*=\tau(A^*)$, so ${\sf opt}=\tau^*+\nu^*$. 
The quantity $\f{\tau(B)}{\nu(A)-\nu(A \cup B)}$ is called the {\bf density} of $B$ (w.r.t. $A$); this is the price 
paid by $B$ for a unit of potential.  
The {\sc Greedy Algorithm} (a.k.a. {\sc Relative Greedy Heuristic}) for the problem starts with $A=\empt$ and while $\nu(A)>\nu^*$ 
repeatedly adds to $A$ a non-empty augmenting set $B \subs U$ that satisfies the 
following condition, while such $B$ exists:

\medskip \noindent
{\bf Density Condition:}  \ 
$\displaystyle \f{\tau(B)}{\nu(A)-\nu(A \cup B)} \leq \min\left\{1,\f{\tau^*}{\nu(A)-\nu^*}\right\}$.
\medskip

Note that since $\nu$ is decreasing,
$\nu(A)-\nu(A \cup A^*) \geq \nu(A)-\nu(A^*)=\nu(A)-\nu^*$; hence if $\nu(A)>\nu^*$, then 
$\f{\tau(A^*)}{\nu(A)-\nu(A \cup A^*)} \leq \f{\tau^*}{\nu(A)-\nu^*}$
and there exists an augmenting set $B$ that satisfies the condition  
$\f{\tau(B)}{\nu(A)-\nu(A \cup B)} \leq \f{\tau^*}{\nu(A)-\nu^*}$, e.g., $B=A^*$.
Thus if $B^*$ is a minimum density set and $\f{\tau(B^*)}{\nu(A)-\nu(A \cup B^*)} \leq 1$,
then $B^*$ satisfies the Density Condition; 
otherwise, the density of $B^*$ is larger than $1$ so no set can satisfy the Density Condition. 
The following statement is known, c.f. an explicit proof in \cite{NS}.

\begin{theorem} \label{t:ga}
The {\sc Greedy Algorithm} achieves approximation ratio
$
1+\f{\tau^*}{\sf opt} \ln \f{\nu(\empt)-\nu^*}{\tau^*}
% =1+\f{\tau^*}{\sf opt} \cdot \ln \left(1+\f{\nu_0-{\sf opt}}{\tau^*} \right) \ .
$.
\end{theorem}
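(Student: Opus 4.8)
The plan is to analyze the Greedy Algorithm iteration by iteration, tracking how the potential $\nu(A)$ decreases and how the total payment $\tau$ accumulates, and then to bound the final cost $\nu(A)+\tau(A) \le \nu^* + \sum_i \tau(B_i)$ against ${\sf opt} = \nu^* + \tau^*$. First I would set up notation: let $B_1, B_2, \dots, B_t$ be the augmenting sets chosen in order, let $A_i = B_1 \cup \cdots \cup B_i$ (with $A_0 = \emptyset$), and let $\nu_i = \nu(A_i)$, so $\nu_0 = \nu(\emptyset)$ and $\nu_t \le \nu^*$ (the algorithm stops when $\nu(A) \le \nu^*$). Since $\tau$ is sub-additive with $\tau(\emptyset)=0$, the payment of the output satisfies $\tau(A_t) \le \sum_{i=1}^t \tau(B_i)$, so it suffices to bound $\nu^* + \sum_i \tau(B_i)$.

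The key step is the per-iteration inequality coming from the Density Condition. At iteration $i$, the chosen set $B_i$ satisfies
$$
\tau(B_i) \le \frac{\tau^*}{\nu_{i-1}-\nu^*}\,\bigl(\nu_{i-1}-\nu_i\bigr) \ .
$$
Summing over $i$ and comparing the sum to an integral of $\frac{\tau^*}{x - \nu^*}$ as $x$ ranges down from $\nu_0$ to roughly $\nu^*$, one gets $\sum_i \tau(B_i) \lesssim \tau^* \ln\frac{\nu_0 - \nu^*}{(\text{something})}$. The delicate point is the last iteration: the greedy stops either because $\nu(A) \le \nu^*$ is reached, or because no set of density $\le 1$ exists anymore, in which case one uses the "$\min\{1,\cdot\}$" branch of the Density Condition to absorb the final contribution — in the worst case the last augmenting set is paid for at density $1$, contributing at most $\nu^* - \nu_{t-1}$... more carefully, one splits the telescoping sum at the point where $\nu(A)$ first drops to within $\tau^*$ of $\nu^*$, bounding the tail geometrically/by $\tau^*$ directly and the head by the logarithmic integral, yielding $\sum_i \tau(B_i) \le \tau^*\bigl(1 + \ln\frac{\nu_0-\nu^*}{\tau^*}\bigr)$ after the careful accounting. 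Then the total cost is at most
$$
\nu^* + \tau^* + \tau^* \ln\frac{\nu(\emptyset)-\nu^*}{\tau^*} = {\sf opt} + \tau^*\ln\frac{\nu(\emptyset)-\nu^*}{\tau^*} \ ,
$$
so the ratio is $1 + \frac{\tau^*}{{\sf opt}}\ln\frac{\nu(\emptyset)-\nu^*}{\tau^*}$, as claimed.

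I expect the main obstacle to be handling the stopping condition cleanly — specifically, showing that the transition from "density-$\le \frac{\tau^*}{\nu(A)-\nu^*}$ regime" to the regime where density $1$ becomes the binding constraint costs only an additive $\tau^*$ rather than something larger, and making the discrete-sum-to-integral comparison rigorous when the $\nu_i$ need not decrease smoothly (a single $B_i$ could overshoot, driving $\nu_i$ well below $\nu^*$). The standard fix is to replace $\nu_i$ by $\max\{\nu_i, \nu^*\}$ in the analysis, or equivalently to observe that the very last step contributes at most its share at density $1$; since this is a known result (the paper cites \cite{NS} for an explicit proof), I would present the telescoping bound with the split at the threshold $\nu(A) - \nu^* = \tau^*$ and invoke concavity of $\ln$ / the integral comparison, flagging that the only subtlety is the final iteration, which is absorbed into the "$+1$" term.
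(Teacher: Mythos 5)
Your proposal is correct and follows the standard relative-greedy analysis (per-iteration density inequality, telescoping/integral comparison, split at the threshold $\nu(A)-\nu^*=\tau^*$ with the tail absorbed into the additive $\tau^*$); note that the paper itself gives no proof of Theorem~\ref{t:ga} but cites \cite{NS} for an explicit one, which proceeds along exactly these lines. The only imprecision in your write-up is the initial assertion that the algorithm stops with $\nu(A_t)\leq\nu^*$: it may also stop because the minimum density exceeds $1$, but then (since $A^*$ always has density at most $\f{\tau^*}{\nu(A)-\nu^*}$) one gets $\nu(A_t)-\nu^*<\tau^*$, and your later accounting of the final regime at density $1$ already covers this case.
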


This applies also in the case when we can only compute a $\rho$-approximate minimum density augmenting set,
while invoking an additional factor $\rho$ in the ratio.

To use the framework of Theorem~\ref{t:ga} we need to define $\tau$ and $\nu$.
Let $J \subs E$ be an edge set. The payment $\tau(J)=|J|$ is just the size of $J$. 
The potential of $J$ is defined by $\nu(J)=|R^J|-1$, where $R$ is a set of terminals such that $L \subs R \subs V$,
defined in the following lemma.
For an edge set $F$ let $F_{LL}$ be the set of edges in $F$ with both ends in $L$, 
and $F_L$ the set of edges in $F$ that have exactly one end in $L$.

\begin{lemma} \label{l:R}
Let $F$ be an optimal solution to {\sc Crossing Family Augmentation} instance and $c$ be a cost function on $E$ 
defined by $c(e)=0$ if $e \in E_{LL}$, $c(e)=1$ if $e \in E_L$, and $c(e)=2$ otherwise.
Let $J$ be a $2$-approximate $c$-costs solution and let $R$ be the set of ends of the edges in $J$. 
Then $|R| \leq c(J)+L \leq 4|F|-|L|=4{\sf opt}-|L|$.
\end{lemma}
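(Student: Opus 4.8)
The plan is to establish the two inequalities $|R| \le c(J) + |L|$ and $c(J) + |L| \le 4|F| - |L|$ separately. The first is essentially a counting identity about edges and their ends, and the second uses the $2$-approximation guarantee together with a bound on $c(F)$ in terms of $|F|$ and $|L|$.

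First I would handle $|R| \le c(J)+|L|$. Recall $R$ is the set of ends of edges in $J$. Write $J = J_{LL} \cup J_L \cup J_0$, where $J_{LL}$ are edges with both ends in $L$, $J_L$ those with exactly one end in $L$, and $J_0$ those with no end in $L$. The ends of $J_{LL}$ all lie in $L$, so they contribute at most $|L|$ to $|R|$ but cost $0$; the ends of edges in $J_L$ contribute at most one new (non-$L$) node each, matching $c(e)=1$; and the ends of edges in $J_0$ contribute at most two new nodes each, matching $c(e)=2$. Summing, $|R| \le |L| + |J_L| + 2|J_0| = |L| + c(J)$. (One should be slightly careful that $R$ may already include some leaves not touched by $J$ — but since $R$ is \emph{defined} as the ends of $J$, this is clean; alternatively, if the intended reading is $L \subs R$, one still gets $|R \sem L| \le c(J)$, hence $|R| \le c(J)+|L|$.)

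Next I would bound $c(J) \le 2\, c(F)$ by the $2$-approximation guarantee, and then bound $c(F)$. For $c(F)$: $F$ is an optimal (min-\emph{size}) {\sc Crossing Family Augmentation} solution, and by Lemma~\ref{l:minimal} $F$ is a forest. The key structural input is that we may assume $F$ is leaf-to-leaf-ish only in the relevant sense; in general each edge of $F$ has cost at most $2$, so $c(F) \le 2|F|$, giving $c(J) \le 4|F|$ and hence $|R| \le 4|F| + |L|$ — which is too weak. So the real content is to argue $c(F) \le 2|F| - |L|$, i.e. that the total cost saved below $2|F|$ is at least $|L|$. The intuition: every leaf $\ell \in L$ is a minimal deficient set, so it must be covered by some edge of $F$ incident to $\ell$; assign to each leaf one such incident edge. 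An edge incident to one leaf has cost $\le 1$ (a saving of $\ge 1$ versus $2$), and an edge incident to two leaves has cost $0$ (a saving of $2$) but may be charged by two leaves (saving $1$ per leaf). Hence the total saving is at least $|L|$, giving $c(F) \le 2|F| - |L|$. Combining, $c(J) \le 2c(F) \le 4|F| - 2|L|$, so $|R| \le c(J) + |L| \le 4|F| - |L| = 4\,{\sf opt} - |L|$.

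The main obstacle I expect is making the charging argument for $c(F) \le 2|F|-|L|$ fully rigorous: one must verify that each leaf genuinely has an incident edge in $F$ (using minimality/feasibility of $F$ and that leaves are singletons with no edge having both ends in one $\FF$-class), and that the double-charging of a cost-$0$ edge by two leaves is correctly bookkept so that the per-leaf saving of $1$ is never overcounted. A clean way is to define, for each leaf $\ell$, $s(\ell) = 2 - c(e_\ell)$ where $e_\ell$ is a chosen incident edge, split $s(\ell)$ as $1$ "for $\ell$" plus a possible extra $1$ only when $e_\ell$ is cost-$0$, and then observe $\sum_\ell 1 = |L|$ already suffices — we do not even need the extras — as long as distinct leaves charging the same cost-$0$ edge each legitimately claim a unit of its $2$-unit saving. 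Everything else is routine arithmetic.
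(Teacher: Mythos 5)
Your proposal is correct and follows essentially the same route as the paper: the paper also notes $|R| \le c(J)+|L|$ by counting non-leaf ends, bounds $c(J) \le 2c(F)$, and proves $c(F) \le 2|F|-|L|$ via the identity $c(F)=2|F|-(|F_L|+2|F_{LL}|)$ together with $|F_L|+2|F_{LL}| \ge |L|$, which is exactly your charging argument that every leaf (a singleton minimal member of $\FF$) must have an incident edge of $F$. Your bookkeeping of the per-leaf saving is just a restatement of that inequality, so nothing further is needed.
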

\begin{proof}
Clearly, $|R| \leq c(J)+|L|$. We show that $c(J) \leq 4|F|-2|L|$.
Let $F'$ be the set of edges in $F$ that have no end in $L$.
Since $|F'| = |F| - |F_L|-|F_{LL}|$ and $2|F_{LL}|+|F_L| \geq L$ 
$$
c(F) =|F_L|+2|F'| =|F_L|+2(|F|-|F_L|-|F_{LL}|)=2|F|-(|F_L|+2|F_{LL}|) \leq 2|F|-|L| \ .
$$
Since $c(J) \leq 2c(F)$, the lemma follows.
\end{proof}

It is easy to see that $\nu$ is decreasing and $\tau$ is subadditive. % (in fact, additive).
The next lemma shows that the obtained {\sc Min-covering} instance is equivalent to the 
{\sc Crossing Family Augmentation} instance, and that we may assume that $\tau^*={\sf opt}$ and $\nu^*=0$.

\begin{lemma} \label{l:fe}
If $J$ is a feasible solution to {\sc Crossing Family Augmentation} then $\nu(J)=0$. 
If $J$ is a feasible {\sc Min-Covering} solution then one can construct in polynomial time 
a feasible {\sc Crossing Family Augmentation} solution of size $\leq \tau(J)+\nu(J)$. 
In particular, both problems have the same optimal value,
and {\sc Min-Covering} has an optimal solution $J^*$ such that $\nu(J^*)=0$ and $\tau(J^*)={\sf opt}$.
\end{lemma}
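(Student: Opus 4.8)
The plan is to prove the two implications separately and then deduce the ``in particular'' part. For the first implication, suppose $J$ is a feasible {\sc Crossing Family Augmentation} solution, so $J$ covers $\FF$. Then every member of $\FF$ is covered by $J$, so the residual family $\FF^J$ is empty, every pair of nodes of $V$ is $\FF^J$-inseparable, hence there is a single $\FF^J$-class, and therefore $|R^J|=1$ regardless of which $R$ with $L\subs R\subs V$ we chose. Thus $\nu(J)=|R^J|-1=0$, as claimed.

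For the second implication, suppose $J$ is a feasible {\sc Min-Covering} solution, i.e. $J\subs E$ with $\nu(J)=|R^J|-1$ finite (which it always is). We want to augment $J$ to a genuine {\sc Crossing Family Augmentation} solution by adding at most $\nu(J)=|R^J|-1$ extra edges. Here I would invoke the characterization recalled in Section~2: for any $R$ with $L\subs R$, an edge set is feasible iff the corresponding induced subgraph of the incidence graph is connected; equivalently, working in the residual instance $((V^J,E^J),\FF^J)$, it suffices to cover the residual family $\FF^J$. Since $R\supseteq L$, the residual terminal set $R^J$ contains all leaves of $\FF^J$, and one checks that $E^J$ still contains, for each residual leaf, an edge incident to it and to another terminal (this is exactly why $R$ was enlarged to the endpoints of the $2$-approximate $c$-cost solution $J_0$ of Lemma~\ref{l:R} — note the $J$ here plays the role of that $J_0$). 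Hence we can pick a spanning tree in the ``contracted'' structure on the $|R^J|$ residual terminals using at most $|R^J|-1$ edges of $E^J$, and their union with $J$ covers $\FF$. This adds $\le |R^J|-1=\nu(J)$ edges, so the resulting solution has size $\le |J|+\nu(J)=\tau(J)+\nu(J)$. The construction is clearly polynomial (build $H^J$, take a spanning tree).

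From the two implications the ``in particular'' part is immediate: an optimal {\sc Crossing Family Augmentation} solution $F$ has $\nu(F)=0$, so it is a {\sc Min-Covering} solution of value $\tau(F)+\nu(F)=|F|={\sf opt}$; conversely any {\sc Min-Covering} solution $J$ yields a {\sc Crossing Family Augmentation} solution of size $\le\tau(J)+\nu(J)$, so the {\sc Min-Covering} optimum is $\ge{\sf opt}$ as well, giving equality. Taking $J^*=F$ for an optimal $F$ gives an optimal {\sc Min-Covering} solution with $\nu(J^*)=0$ and $\tau(J^*)=|F|={\sf opt}$.

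The main obstacle is the second implication: one must be careful that after passing to the residual instance the edges of $E^J$ really do connect up the residual terminals $R^J$ with only $|R^J|-1$ of them, i.e. that the incidence graph $H^J$ restricted to $E^J\cup R^J$ is connected and that a spanning tree there uses the right number of $E$-edges. This is where the precise definition of $R$ in Lemma~\ref{l:R} (endpoints of a $2$-approximate $c$-cost solution, which itself covers $\FF$) is used — it guarantees that $H^J[E^J\cup R^J]$ is connected, so that a spanning-tree argument goes through; the rest is bookkeeping.
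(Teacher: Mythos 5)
Your first implication and the derivation of the ``in particular'' part are fine and match the paper. The gap is in the second implication, which is the only quantitative step. You claim that the terminal--terminal edges of $E^J$ contain a spanning tree on all of $R^J$, justifying this only by the observation that every residual leaf has an incident edge going to another terminal. That observation does not give connectivity of the subgraph of $E^J$ induced on $R^J$, and in fact no such connectivity is guaranteed: a feasible cover of $\FF^J$ whose edges lie inside $R^J$ can perfectly well be a disconnected graph on $R^J$ (covering a symmetric crossing family is not the same as spanning the terminals), so the spanning tree you want to pick need not exist. Your fallback remark that $H^J[E^J\cup R^J]$ is connected does not repair this, since a spanning tree of the incidence graph is a tree on $E$-nodes and $R$-nodes of $H^J$, not a set of $|R^J|-1$ edges of $E$. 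You also conflate two different objects: the $J$ in the lemma is an arbitrary {\sc Min-Covering} solution, while the edge set of Lemma~\ref{l:R} is a fixed auxiliary $2$-approximate $c$-cost \emph{cover} of $\FF$ whose endpoints define $R$; it does not ``play the role'' of your $J$.

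The paper's argument avoids the connectivity claim entirely. Let $I$ be the cover from Lemma~\ref{l:R}; by construction every edge of $I$ has both ends in $R$, so $I^J$ covers the residual family $\FF^J$ and every edge of $I^J$ has both ends in $R^J$. Take an inclusion-minimal $I'\subs I^J$ such that $J\cup I'$ is feasible; $I'$ is then a minimal cover of $\FF^J$, hence a forest by Lemma~\ref{l:minimal}, and since its nodes lie in $R^J$ this gives $|I'|\leq |R^J|-1=\nu(J)$, so $|J\cup I'|\leq \tau(J)+\nu(J)$. So the correct use of Lemma~\ref{l:R} is to supply a cover living inside $R\times R$ from which a minimal (forest) augmentation is extracted, not to guarantee a spanning tree on the residual terminals; your sufficiency observation (a tree spanning all of $R^J$ would cover $\FF^J$) is true but cannot be invoked because such a tree may not exist in $E^J$.
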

\begin{proof}
If $J$ is a feasible {\sc Crossing Family Augmentation} solution then $|R^J|=1$ and thus $\nu(J)=0$.
Let $I$ be a {\sc Min-Covering} solution such that every edge in $I$ has both ends in $R$; e.g., $I$ can be as in Lemma~\ref{l:R}. 
Then $I^J$ is a feasible solution to the residual problem w.r.t. $J$ and every edge in $I^J$ has both ends in $R^J$. 
Let $I' \subs I^J$ be an inclusion minimal edge set such that $J \cup I'$ is a feasible solution.
By Lemma~\ref{l:minimal}, $I'$ is a forest, hence $|I| \leq |R^J|-1$. 
Consequently, $J \cup I'$ is a feasible solution of size at most $|J|+|I'| \leq |J|+|R^J|-1=\tau(J)+\nu(J)$.
\end{proof}

Recall also that $\nu(\empt) \leq 4{\sf opt}-|L|$, by Lemma~\ref{l:R}.
We will show how to find for any $\eps>0$, a $(1+\eps)$-approximate best density set in polynomial time. 
It follows therefore that we can apply the greedy algorithm to produce a solution of value $1+\eps$ times of
$$
1+\f{\tau^*}{\sf opt} \ln \f{\nu(\empt)-\nu^*}{\tau^*}=1+\ln \f{4 {\sf opt} -|L|}{\sf opt} = 1+\ln \left(4-\f{|L|}{\sf opt} \right) \ .
$$

%%%%%%%
% \section{Polynomial time implementation} \label{ss:impl}
%%%%%%%

In what follows note that if $a_1, \ldots,a_q$ and $b_1, \ldots b_q$ are positive reals,
then by an averaging argument there exists an index $1 \leq i \leq q$ such that $a_i/b_i \leq \sum_{j=1}^q a_j/\sum_{i=1}^q b_j$.
% Otherwise, $a_i \sum_{j=1}^p b_j > b_i \sum_{j=1}^p a_i$ for all $i$, giving the contradiction
% $$\sum_{j=1}^p b_j \cdot \sum_{i=1}^p a_i= \sum_{i=1}^p \left(a_i \sum_{j=1}^p b_j \right) > 
% \sum_{i=1}^p \left(b_i  \sum_{j=1}^p a_i \right) = \sum_{i=1}^p a_i \cdot \sum_{j=1}^p b_j \ .$$

Given a {\sc Crossing Family Augmentation} instance, a set $R \supseteq L$ of terminals, and $F \subs E$,
consider the corresponding {\sc SS-CDS} instance $(H=(U,E_H),R)$ and the set of non-terminals $Q$ that corresponds to $F$.
The density of $F$ is $\f{|F|}{|R|-|R^F|}$, and in the {\sc SS-CDS} instance this is computed 
by taking a maximal forest in the graph induced by $Q$ and the terminals that have a neighbor in $Q$;
then the density is $|Q|$ over the number of trees in this forest. 
So in what follows we may speak of a density of a subforest of $H$. 
Let $T_i=(S_i \cup R_i,E_i)$, $i=1, \ldots,q$,  be the connected components of such a forest,
($R_i$ is the set of terminals in $T_i$) and let $s_i=|S_i|$ and $r_i=|R_i|$, where $r_i \geq 2$. 
The density of the forest is $\sum_{i=1}^q s_i/ \sum_{i=1}^q (r_i-1)$ 
while the density of each $T_i$ is $s_i/(r_i-1)$. 
By an averaging argument, some $T_i$ has density not larger than that of the forest.
Consequently, we may assume that the minimum density is attained for a tree, say $T$.  

Let $T=(S \cup R,E)$ be a tree with leaf set $R$. The density of $T$ is $\f{s}{r-1}$, where $r=|R|$ is the number of terminals ($R$-nodes) 
and $s$ is the number of non-terminals ($S$-nodes) in $T$. 
The usual approach is to show that for any $k$ there exists a subtree $T'$ of $T$ with 
$k$ terminals (or $k$ non-terminals) 
such that the density of $T'$ is at most $1+f(k)$ times the density of $T$, where $\lim_{k \rightarrow \infty} f(k)=0$.
The decomposition lemma that we prove is not a standard one. 
The difficulty can be demonstrated by the following examples.
Consider the case when $T$ is a star with $n$ leaves. 
Then the density of $T$ is $1/(n-1)$, while a subtree with $k$ leaves has density $1/(k-1)$. 
If $T$ is a path with $n$ non-terminals, then the density of $T$ is $n$, 
while a subtree with $k<n$ non-terminals has density $k/0=\infty$. 
In both cases, the density of the subtree may be arbitrarily larger than that of $T$.
To overcome this difficulty, we will decompose $T$ w.r.t. a certain subset $P$ of the non-terminals.

Let $P \subs S$. Let $s=|S|$, $r=|R|$, and $p=|P|$.
For a subtree $T'$ of $T$ let $S(T')$, $R(T')$, and $P(T')$ denote 
the set of $S$-nodes, $R$-nodes, and $P$-nodes in $T'$, respectively. 
We prove the following.

\begin{lemma} \label{l:td}
Let $k \geq 2$. If $p \geq 3k+1$ then there exists subtrees $T_1, \ldots, T_q$ of $T$ such that the following holds. 
\begin{itemize}
\item
$\sum_{i=1}^q |S(T_i)| \leq s+q$. % and $\sum_{i=1}^q r_i =r$; moreover 
\item
Every $R$-node belongs to exactly one subtree, hence  $\sum_{i=1}^q |R(T_i)| =r$.
\item
$|P(T_i)| \in [k,3k]$ for all $i$ and $q \leq \f{p}{k-1}$.
\end{itemize}
\end{lemma}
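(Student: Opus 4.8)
The plan is to decompose $T$ greedily by repeatedly pruning off a "small" subtree that contains a controlled number of $P$-nodes, and to account carefully for the $S$-nodes shared between consecutive pieces. First I would root $T$ at an arbitrary leaf and, for each node $v$, consider the subtree $T_v$ hanging below $v$ and the count $p(v) = |P(T_v)|$. The key selection step is to walk down from the root to find a node $v$ that is \emph{lowest} with $p(v) \geq k$: then $p(v) \geq k$ but each child $u$ of $v$ has $p(u) < k$, and since $v$ itself contributes at most one $P$-node, summing over the (at most, but this is the subtle point) children we need a bound of the form $p(v) \leq 1 + \sum_{u} p(u)$. If $v$ has few children this already gives $p(v) \leq 3k$ up to the $+1$; the genuinely dangerous case is a node of large degree all of whose children-subtrees are small, which is exactly why the hypothesis $p \geq 3k+1$ and the target window $[k,3k]$ (rather than $[k,2k]$) are needed — one groups the light children-subtrees into bundles each carrying between $k$ and (just under) $2k$ $P$-nodes, attaches $v$ to each bundle, and cuts there.

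The next step is the iteration: having identified and removed such a piece $T_i$ with $|P(T_i)| \in [k,3k]$, delete from $T$ all $R$-nodes and all $P$-nodes it used (but \emph{keep} the connecting $S$-node $v$ in the remaining tree so the remainder stays connected — this is the source of the "$+q$" slack), and repeat. Since each piece removes at least $k-1$ "fresh" $P$-nodes in a way that partitions them (each $P$-node used once, except possibly for at most one shared boundary node per cut), after at most $\lceil p/(k-1) \rceil$ rounds we have $p < k$ left over, and I would fold this residual stump into the last piece, checking its $P$-count stays $\le 3k$ — here again the $3k$ upper window absorbs the leftover $<k$ on top of a piece that could be as large as $2k$. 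This gives $q \le p/(k-1)$.

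For the $S$-node bookkeeping: every $S$-node of $T$ lands in at least one piece, and the only way an $S$-node lands in two pieces is by being the boundary node $v$ at a cut, which happens at most once per piece; hence $\sum_i |S(T_i)| \le s + (q-1) \le s+q$. The $R$-node statement is immediate from the construction, since at each cut we remove the $R$-nodes of the pruned piece entirely, so the $R$-nodes are partitioned among $T_1,\dots,T_q$ and $\sum_i |R(T_i)| = r$.

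The main obstacle I anticipate is the high-degree node: making the bundling argument precise so that each bundle's $P$-count provably lands in $[k,3k]$ — one adds light child-subtrees (each with $<k$ $P$-nodes) one at a time until the running total first reaches $k$, at which point it is below $2k$, then $+1$ for $v$ keeps it below $2k+1 \le 3k$ (using $k \ge 2$), and one must separately argue the \emph{last} bundle isn't starved below $k$, which is precisely where $p \ge 3k+1$ enters to guarantee enough total mass. Everything else is routine greedy-pruning accounting.
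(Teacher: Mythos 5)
Your overall strategy is the same as the paper's (greedy pruning at a node $v$ that is minimal with about $k$ $P$-nodes below it, bundling the branches at $v$, and absorbing a small leftover), but there is a genuine gap in your bookkeeping, and it hits exactly the bound that the lemma is needed for. At each cut you only form bundles whose $P$-count reaches $[k,2k)$, you keep the cut node $v$ in the remaining tree, and you leave behind any light branches of $v$ whose total is below $k$ (they may contain $R$-nodes, so they cannot be discarded). Over the course of the process these leftovers accumulate: a later piece, and in particular the final residual stump that you ``fold into the last piece,'' can contain arbitrarily many such leftover branches together with many already-used boundary nodes, and each of those boundary nodes is typically an original $P$-node (it has degree at least $3$ or a terminal neighbor). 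Its \emph{current} (fresh) $P$-count is indeed below $k$, but its $P$-count with respect to the original set $P$ -- which is what the lemma asserts and what the application needs, since the enumeration over subtrees with $|P(T')|\le 3k$ is what gives the $n^{O(k)}$ running time -- can exceed $3k$ by an unbounded amount. For a concrete failure, take a node $w$ with many child subtrees, each containing exactly $k$ $P$-nodes and some terminals spread above the local cut: each child triggers one pruning that leaves behind a used $P$-node plus terminals, and whatever finally absorbs these remnants has original $P$-count far above $3k$. The same accumulation also undermines the ``at most one shared $P$-node per piece'' count you use for $q\le p/(k-1)$.

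The paper avoids this by never leaving partial leftovers at a cut: at the inclusion-minimal rooted subtree $T^v$ with $|P(T^v)|\ge k+1$, \emph{all} branches $B_1,\ldots,B_m$ hanging at $v$ are partitioned at once. This is possible precisely because minimality gives $|P(B_j)|\le k$ for every branch: a greedy packing into bins of capacity $2k$ leaves at most one bin with sum $\le k-1$, which is merged into another bin, so every part plus the node $v$ has $P$-count in $[k,3k]$. The whole of $T^v$ (including $v$) is then removed, so no used $P$-nodes or $R$-carrying scraps survive into later iterations; only the single final stump with fewer than $k$ $P$-nodes remains, and it is merged into one earlier piece of $P$-count at most $2k$, which is where the upper end $3k$ of the window (and the hypothesis $p\ge 3k+1$, guaranteeing such an earlier piece exists) is really used. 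Your per-bundle arithmetic ($<2k$ plus one plus a $<k$ leftover) is fine, but without the ``decompose all of $T^v$ at once and delete it'' step the window $[k,3k]$ is not actually achieved. A secondary, fixable slip: your bundles are required to reach $k$ before adding $v$, which is impossible when $v$ has a single child subtree with $k-1$ $P$-nodes; the requirement should be on the bundle together with $v$.
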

\begin{proof}
% Assume that $|P| \geq 3k+1$ as otherwise we are done.
Root $T$ at some node in $S$. For any $v \in S$ chosen as a ``local root'', 
the subtree $T^v$ rooted at $v$ is a subtree of $T$ that consist of $v$ and its descendants. 
Let $T^v$ be an inclusion minimal rooted subtree of $T$ such that $|P(T^v)| \geq k+1$.
Note that $v \in P$. Let $B_1, \ldots,B_m$ be the branches hanging on $v$ and let $p_j=|P(B_j)|$.
By the definition of $T_v$, each $p_j$ is in the range $[0,k]$ and $\sum_{j=1}^m p_j \geq k$. 
We claim that $\{p_1, \ldots,p_m\}$ can be partitioned such that the sum of each part plus $1$ is in the range $[k,3k]$.
To see this, apply a greedy algorithm for {\sc Multi-Bin Packing} with bins of capacity $2k$;
at the end there is at most one bin with sum $\leq k-1$ (as two such bins can be joined),
and joining this bin to any other bin gives a partition as required.
Now we remove $T^v$ and the $S$-nodes on the path from $v$ to its closest terminal ancestor, 
and apply the same procedure on the remaining tree.
If the last rooted subtree $T^v$ considered has $|P(T^v)| \leq k-1$, then this tree can be joined to 
a subtree $T_i$ with $|P(T_i)| \leq 2k$ derived in previous iteration.  
Finally, $q \leq \f{p+q}{k}$ by the construction and since $|P(T_i)| \geq k$ for all $i$;
this implies $q \leq \f{p}{k-1}$.
\end{proof}

Now we let $P=P_1 \cup P_2$, where 
$P_1$ is the set of nodes that have degree at least $3$ in $T$ and 
$P_2$ is the of nodes that have a terminal neighbor in $T$.
Note that $|P_1| \leq r$ and $|P_2| \leq r$. Hence $p \leq 2r$, and clearly $p \leq s$.
By an averaging argument and Lemma~\ref{l:td}, the density of some $T_i$ is bounded by
$s_i/(r_i-1) \leq \sum_{j=1}^q s_j/\sum_{j=1}^q (r_j-1) \leq (s+q)/(r-q)$. 
Thus for $k \geq 3$ we get
$$
\f{s_i}{r_i-1} \cdot \f{r-1}{s} \leq \f{s+q}{r-q}  \cdot \f{r}{s} \leq  \f{s+p/(k-1)}{r-p/(k-1)}  \cdot \f{r}{s} = 
\f{1+1/(k-1)}{1-2/(k-1)} =\f{k}{k-3} =1+\f{3}{k-3} \ .
$$

This implies that we can find a $(1+\eps)$-approximate min-density tree by searching
over all trees $T'$ with $|P(T')| \in [k,3k]$, where given $\eps>0$ we let $k=\lceil 3/\eps \rceil+3$.
Specifically, for every $P' \subs S$ with $|P'| \in [k,3k]$, we find an MST $T'$ in the metric completion 
of the current incidence graph, and then add to $T'$ all the terminals that have a neighbor in $P'$.
Among all subtrees we choose one of minimum density.  
The time complexity is $n^{3k}$ which is polynomial for any fixed $\eps >0$.

The process of adjusting the proof to the {\sc Block-Tree Augmentation} is identical to the one in the proof of Theorem~\ref{t:1}.   
This concludes the proof of Theorem~\ref{t:3}, and thus also the proof of Theorem~\ref{t:2} is complete.

\section{Covering skew-supermodular functions (Theorem~\ref{t:eca})} \label{s:element}
 
Let  $p:2^S \rightarrow \mathbb{Z}$ be a set function and $J$ an edge set on a finite groundset $S$. 
We say that {\bf $J$ covers $p$} if $d_J(A) \geq p(A)$ for all $A \subs S$,
where $d_J(A)$ denote the set of edges with exactly one end in $A$.
$p$ is {\bf symmetric} if $p(A)=p(S \sem A)$ for all $A \subs S$, and $p$
is {\bf skew-supermodular} (a.k.a. {\bf weakly supermodular}) 
if for all $A,B \subs S$ at least one of the following two inequalities holds: 
$$ 
p(A)+p(B) \leq p(A \cap B)+p(A \cup B)  \ \ \ \ \ 
p(A)+p(B) \leq p(A \sem B)+p(B \sem A)
$$
{\sc Element Connectivity Augmentation} can be reduced to the following problem, 
with skew-supermodular set function $p$, c.f. \cite{FJ-survey,N,BK}.

\begin{center} \fbox{\begin{minipage}{0.975\textwidth} \noindent
\underline{{\sc Set Function Edge Cover}} \\
{\em Input:} \ \ A set function $p$ on a ground-set $S$.  \\
{\em Output:}  A minimum size set $J$ of edges that covers $p$.
\end{minipage}}\end{center}

\noindent
In this problem, the function $p$ may not be given explicitly, and a polynomial time implementation of algorithms 
requires that some queries related to $p$ can be answered in polynomial time.
But the problem is also NP-hard for skew-supermodular $p$ even if $p$ given explicitly, specifically 
when $p_{\max}=1$ and $|A|=3$ for every set $A$ with $p(A)=1$ \cite{N}.

In the degree bounded version of this problem we are also given degree bounds $\{b(v):v \in S\}$ 
and require that $d_J(v) \leq b(v)$ for all $v \in S$.

\begin{definition}
A function $g:S \rightarrow \mathbb{Z}_+$ is a {\bf $p$-transversal} if $g(A) \geq p(A)$ for all $A \subs S$.
Let $T_g=\{v \in S:g(v) \geq 1\}$ denote the support of $g$.
We say that $g$ is a {\bf minimal $p$-transversal} if for any $v \in T_g$ reducing $g(v)$ by $1$
results in a function that is not a $p$-transversal.
\end{definition}

The following was proved by Bencz\'{ur} and Frank in \cite{BF}, see also \cite[Lemmas 1.1 and 3.2]{N}.

\begin{lemma} \label{l:Tg}
Let $g$ be a transversal of a skew-supermodular set function $p$. Then:
\begin{itemize}
\item
$g(S)=\max\{\sum_{A \in \FF}p(A):\FF \mbox{ is a subpartition of } S\}$ if $g$ is minimal.
\item
There exists an optimal $p$-cover $J$ such that every $e \in J$ has both ends in $T_g$. 
\end{itemize}
\end{lemma}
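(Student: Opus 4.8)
The statement has two parts, and I would treat them separately. For the first part --- that $g(S)=\max\{\sum_{A\in\FF}p(A):\FF\text{ a subpartition of }S\}$ when $g$ is a minimal $p$-transversal --- the inequality $g(S)\ge\sum_{A\in\FF}p(A)$ for any subpartition $\FF$ is immediate: since $g$ is a $p$-transversal, $g(A)\ge p(A)$ for each $A\in\FF$, and since the members of $\FF$ are pairwise disjoint, $\sum_{A\in\FF}g(A)\le g(S)$ (here I am using $g(A)=\sum_{v\in A}g(v)$ and $g\ge 0$). The real work is the reverse inequality: I must exhibit a subpartition $\FF$ with $\sum_{A\in\FF}p(A)\ge g(S)$. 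The natural approach is, for each $v\in T_g$, to use minimality: reducing $g(v)$ by one destroys the transversal property, so there is a ``tight'' set $A_v\ni v$ with $g(A_v)=p(A_v)$ and, in fact, $g(v)$ copies of this tightness can be tracked --- more carefully, one wants a family of tight sets witnessing every unit of $g$. Then I would use the skew-supermodularity of $p$ (equivalently, a suitable submodularity of the slack function $g-p$, which is supermodular on tight sets in the skew sense) together with an uncrossing argument: given two crossing tight sets, one of the two skew inequalities lets me replace them by either $A\cap B, A\cup B$ or $A\sem B, B\sem A$, both of which must again be tight (since $g$ is modular and $g-p\ge 0$). Iterating the uncrossing, I can push towards a laminar, and then a disjoint, family of tight sets covering $T_g$ with the right total. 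This is the standard Benczúr--Frank argument, and I would cite \cite{BF} for the details rather than reproduce the full uncrossing.

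For the second part --- that some optimal $p$-cover $J$ has all edges with both ends in $T_g$ --- I would argue by a local exchange on an arbitrary optimal cover. Suppose $J$ is an optimal $p$-cover and $e=xy\in J$ has an end $x\notin T_g$, i.e.\ $g(x)=0$. I want to re-route the end $x$ of $e$ to some vertex in $T_g$ without increasing $|J|$ and without destroying coverage. The key point is that the sets ``in danger'' after removing $e$ from one side are exactly the sets $A$ with $x\in A$, $y\notin A$, $d_{J-e}(A)<p(A)$, i.e.\ $d_J(A)=p(A)$; call these the sets newly deficient at $x$. Since $g$ is a $p$-transversal and $g(x)=0$, for any such $A$ we have $g(A)=g(A\sem x)\ge p(A)=d_J(A)$, so $A\sem x$ still ``needs'' the edge. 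I would like to show there is a single vertex $w\in T_g$ lying in every newly-deficient set $A$, so that re-routing $e$ to $w$ (replacing $e=xy$ by $wy$, or if $w=y$ deleting $e$, which only improves the size) restores coverage. Existence of such a common $w$ should follow from the fact that the newly-deficient sets form a crossing family on which $p$ is tight, hence by uncrossing (again using skew-supermodularity) their intersection structure is controlled, and the transversal $g$, being positive on a transversal of all $p$-positive sets, must hit their common core --- more precisely, one shows the newly-deficient sets have a nonempty intersection and that intersection meets $T_g$. I would also need to note that re-routing does not create new deficiencies on the $y$-side, which is handled symmetrically. The main obstacle will be making this ``common vertex'' claim precise: it requires combining the uncrossing of tight sets with the defining property of a $p$-transversal, and getting the quantifiers right (it is not a single edge but potentially many bad ends that must be fixed, so one iterates the exchange, arguing that the number of bad ends strictly decreases). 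Since this too is essentially Lemma~3.2 of \cite{N} and Lemma~1.1 there, I would lean on those citations for the technical core and present only the structure of the exchange argument.
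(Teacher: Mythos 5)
The paper gives no proof of this lemma: it is invoked as a known result of Bencz\'ur and Frank, with pointers to \cite{BF} and to \cite[Lemmas 1.1 and 3.2]{N}, and your outline (the easy direction plus tight-set uncrossing for the subpartition formula, and the endpoint re-routing exchange for the second part) is exactly the standard argument from those references, to whose technical core you also defer. Your proposal is therefore consistent with the paper's treatment; the one point worth flagging is that in the re-routing step the sets containing $y$ and the new endpoint $w$ but not $x$ also lose coverage, and excluding these uses symmetry of $p$ (pass to complements) together with the observation that for two tight sets containing $x$ but not $y$ the second alternative of skew-supermodularity is impossible because the edge $xy$ is counted twice --- this is precisely the ``common vertex'' crux you identify and leave to \cite{N}.
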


Let $\tau(p)$ denote the size of a minimal $p$-cover.
As $g=\{d_J(v):v \in S\}$ is a $p$-transversal for any $p$-cover $J$, 
$\tau(p) \geq g(S)/2$ for any minimal $p$-transversal $g$. 
Thus a natural approach to compute a small $p$-cover is: repeatedly choose an edge $uv$
with $u,v \in T_g$, such that updating $p$ and reducing $g(u)$ and $g(v)$ by $1$, 
keeps $g$ being a $p$-transversal. This approach works for many interesting special cases, c.f. \cite{BK},
but in general such an edge $uv$ may not exist.
Formally, given $u,v \in T_g$ define $p^{uv}$ and $g^{uv}$ by:
$$p^{uv}(A)=\max\{p(A)-1,0\} \mbox{ if } |A \cap \{u,v\}|=1 \mbox{ and } p^{uv}(A)=p(A) \mbox{ otherwise;}$$
$$g^{uv}(w)=g(w)-1 \mbox{ if } w=u \mbox{ or if } w=v \mbox{ and } g^{uv}(w)=g(w) \mbox{ otherwise.}$$
It is easy to see that if $p$ is (symmetric) skew-supermodular, so is $p^{uv}$. 
However, $g^{uv}$ may not be a $p^{uv}$-transversal if $g$ is. 
We say that a pair $u,v \in T_g$ is {\bf $(p,g)$-legal} if $g^{uv}$ is a $p^{uv}$-transversal; 
then replacing $p,g$ by $p^{uv},g^{uv}$ is the {\bf splitting-off} operation at $u,v$. 
Intuitively, splitting-off is an attempt to add the edge $uv$ to a partial solution,
and to consider the residual problem of covering $p^{uv}$ with the residual lower bound 
$\lceil g^{uv}(S)/2 \rceil=\lceil g(S)/2 \rceil-1$.
We need the following result due to \cite{N}, see also \cite{BK} for a short and elegant proof.

\begin{lemma}[\cite{N}] \label{l:legal}
Let $p$ be symmetric skew-supermodular and $g$ a $p$-transversal.
If $p_{\max} \geq 2$ then there exists a $(p,g)$-legal pair.
\end{lemma}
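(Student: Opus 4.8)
The plan is to prove Lemma~\ref{l:legal} by a counting/uncrossing argument on the ``tight'' and ``dangerous'' sets obstructing legality. Call a set $A$ \emph{$g$-tight} if $g(A)=p(A)$, and for a pair $u,v \in T_g$ call $A$ \emph{$(u,v)$-blocking} if $g(A)-1 < p^{uv}(A)$, i.e. $A$ separates $u$ from $v$ and is $g$-tight with $p(A) \geq 1$. A pair $u,v$ fails to be $(p,g)$-legal exactly when some $(u,v)$-blocking set exists. So the goal is: assuming every pair in $T_g$ is blocked, derive a contradiction with $p_{\max} \geq 2$. First I would fix a set $M$ with $p(M)=p_{\max} \geq 2$; since $g$ is a $p$-transversal, $g(M) \geq 2$, so $M \cap T_g$ contains at least two points, and more carefully one can pick $u,v \in M \cap T_g$ (or $u,v \in (S \sem M)\cap T_g$ using symmetry) — the point being to start from a pair that ``ought'' to be splittable because the big demand set $M$ does not separate them.

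The core of the argument is an uncrossing step. Suppose $u,v \in T_g$ and let $A$ be an inclusion-minimal $(u,v)$-blocking set. Now take a third vertex $w \in T_g$ (which exists provided $|T_g| \geq 3$; the small cases $|T_g| \leq 2$ have to be handled separately and are easy since then $p_{\max}$ and the structure are very restricted) and consider the pair $u,w$ or $v,w$. If both of these are also blocked, pick minimal blocking sets $B$ for one of them, and apply skew-supermodularity to $A$ and $B$. Because $A,B$ are $g$-tight and $p$ is skew-supermodular, one of the two submodular-type inequalities holds; using $g(A \cap B)+g(A \cup B) \le g(A)+g(B)$ and $g(A \sem B)+g(B\sem A)\le g(A)+g(B)$ (modular equality for $d_J$, hence for $g$ if $g$ were $d_J$; here use that $g$ is integer-valued and $p \le g$), one deduces that either $A \cap B$, $A \cup B$, or $A \sem B$, $B \sem A$ are again $g$-tight with positive $p$-value. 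Tracking which of $u,v,w$ lies on which side, one of these smaller or ``rearranged'' tight sets must be a blocking set for a pair that contradicts the minimality we chose, or must force $p$ to take a value exceeding $p_{\max}$ somewhere, or pins down $u,v,w$ so tightly that no valid configuration of three blocked pairs among $\{u,v,w\}$ survives. The standard outcome (this is the Benczúr–Frank style dichotomy) is that the blocking sets for all pairs can be shrunk to a laminar/chain-like structure that is incompatible with the existence of a set $M$ of value $\ge 2$ that is $g$-tight.

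The cleanest route, which I would actually write, is to argue by contradiction through a minimal counterexample on $|S|$, and reduce to the case where $T_g = S$ and $g(v) = 1$ for all $v$ (by splitting demand or deleting slack), so that $g(A) = |A|$ and a set $A$ is $g$-tight iff $p(A) = |A|$ and blocking iff additionally $|A \cap \{u,v\}| = 1$ and $|A| \ge 1$. Then \emph{every} pair being blocked means: for every pair $u,v$ there is a set $A \ni u$, $A \not\ni v$ (or vice versa) with $p(A) = |A| \ge 1$. With $g(v)=1$ everywhere, $p(A) = |A|$ forces $p(A) \le |A|$ with equality; take $M$ with $p(M) = p_{\max} \ge 2$, so $|M| \ge 2$. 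Pick $u,v \in M$. The blocking set $A$ for $u,v$ satisfies $p(A)=|A|$ and WLOG $u \in A$, $v \notin A$. Apply skew-supermodularity to $M$ and $A$: in the first case $p(M \cap A) + p(M \cup A) \ge p(M) + p(A) = |M| + |A| \ge |M \cap A| + |M \cup A|$, forcing equality throughout, in particular $p(M \cap A) = |M \cap A|$ and $M \cap A \ni u$ but $v \in M \sem A$ so $M \cap A$ is still a blocking set for $u,v$ of smaller size — iterate down to $|A| = 1$, i.e. $p(\{u\}) = 1$. In the second case $p(M \sem A) + p(A \sem M) \ge |M| + |A|$; since $v \in M \sem A$ and this side has $p$-value $\ge 1$, one similarly extracts $p(M \sem A) = |M \sem A|$, a blocking set for $u,v$ on the other side, again driving toward singletons. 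Conclude $p(\{x\}) = 1$ for ``most'' $x \in M$; but then for two such singletons $x,y \in M$, the pair $x,y$ should be legal unless a blocking set separates them — and now repeating the uncrossing with $M$ and that small blocking set collapses everything, contradicting $p(M) = |M| \ge 2$ (a set of value $2$ whose every $2$-element ``sub-demand'' is forced to be split).

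The main obstacle I anticipate is making the uncrossing bookkeeping airtight: each time skew-supermodularity is applied one gets a \emph{disjunction}, and one must check that in \emph{every} branch a strictly smaller or strictly ``better'' blocking configuration is produced, so that the induction (on $\sum |A|$ over the chosen blocking sets, or on $|S|$) actually terminates. In particular the symmetric case $p(A) = p(S \sem A)$ must be used to swap sides freely so that ``$u \in A$, $v \notin A$'' can always be arranged, and the reduction to $g \equiv 1$ on $T_g$ must be justified (it does not change skew-supermodularity and does not destroy $p_{\max} \ge 2$). Handling $|T_g| \le 2$, and the possibility that after reductions $p_{\max}$ drops below $2$, are the fiddly edge cases; everything else is the standard Benczúr–Frank uncrossing, and — as the excerpt notes — \cite{BK} gives a short packaging of exactly this dichotomy, which I would cite rather than reprove in full detail.
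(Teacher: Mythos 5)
There is nothing in the paper to compare against: the paper does not prove Lemma~\ref{l:legal}, it imports it from \cite{N} and points to \cite{BK} for a short proof. Your proposal must therefore stand on its own, and it has a genuine flaw at its foundation: the characterization of when a pair fails to be legal is wrong. A pair $u,v\in T_g$ is non-legal iff $g^{uv}(A)<p^{uv}(A)$ for some $A$. If $A$ \emph{separates} $u$ from $v$, then $g^{uv}(A)=g(A)-1$ and $p^{uv}(A)=\max\{p(A)-1,0\}$; when $p(A)\geq 1$ transversality gives $g(A)-1\geq p(A)-1$, and when $p(A)\leq 1$ we have $p^{uv}(A)=0\leq g(A)-1$ since $A$ contains a point of $T_g$. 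So separating sets \emph{never} violate the residual transversal condition -- a tight separating set merely stays tight. Indeed, by your own definition ($g(A)-1<p^{uv}(A)=p(A)-1$, i.e.\ $g(A)<p(A)$) a blocking set cannot exist at all, which would make every pair legal and the lemma vacuous; this internal inconsistency signals that the setup is off. The true obstructions are sets $A$ containing \emph{both} $u$ and $v$ with $g(A)\leq p(A)+1$ (the ``dangerous'' sets, which need not be tight: deficiency exactly $1$ also blocks), or equivalently, by symmetry of $p$, complements containing neither endpoint.

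Because of this, the whole strategy is aimed at the wrong target. Choosing $u,v$ inside a set $M$ with $p(M)=p_{\max}\geq 2$ ``because $M$ does not separate them'' is exactly backwards: $M$ itself is a candidate blocker whenever $g(M)\leq p(M)+1$. The reduction to $g\equiv 1$ on $T_g$ with ``$A$ blocking iff $p(A)=|A|$ and $|A\cap\{u,v\}|=1$'' inherits the same error, and the subsequent uncrossing of $M$ against tight separating sets, and the induction driving blockers down to singletons, all manipulate a family of sets that poses no obstruction in the first place; near-tight sets, which do obstruct, are never treated. The actual argument (as in \cite{N}, and packaged in \cite{BK}) runs differently: assuming every pair of $T_g$ is contained in a dangerous set, one uncrosses (maximal) dangerous sets using skew-supermodularity of $p$, symmetry, and modularity of $g$ -- the delicate point being that a union of two crossing dangerous sets need not be dangerous -- and concludes $p_{\max}\leq 1$, which is precisely the structure recorded in Corollary~\ref{c:legal}. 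None of that machinery appears in the proposal, and the closing suggestion to ``cite \cite{BK} rather than reprove'' concedes the crux, since the lemma \emph{is} the content of that reference.
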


Lemma~\ref{l:legal} implies that if no $(p,g)$-legal pair exists, 
then any inclusion minimal solution on $T_g$ is a forest, and that any tree on $T_g$ is a feasible solution.
In \cite{N,BK} was considered a simple greedy algorithm which repeatedly splits-off legal pairs 
as long as such exist, and then adds to the partial solution a tree (or any inclusion minimal solution) on $T_g$.
% For a polynomial implementation discussion see \cite{N}.

\medskip \medskip

\begin{algorithm}[H]
\caption{{\sc Greedy$(p,g)$} \newline ($p$ is symmetric skew-supermodular, $g$ is a $p$-transversal)}
\label{alg:GC}
$M \gets \empt$ \\
\While{\em there exists a $(p,g)$-legal pair $u,v$}
{$g \gets g^{uv}$, $p \gets p^{uv}$, $M \gets M+uv$}
let $F$ be a tree on $T'=\{v \in S:g(v)=1\}$ \\
\Return{$M \cup F$}  
\end{algorithm}

\medskip \medskip

In the degree bounded version we let $g=\{b(v):v \in S\}$.
If this $g$ is not a $p$-transversal, then the problem has no feasible solution.
To get a degree violation $+1$, at step~4 of the algorithm we choose $F$ to be a path on $T'$.

In \cite{N} is was shown that for skew-supermodular $p$ this algorithm achieves ratio $7/4$,
by characterizing those pairs $p,g$ for which no $(p,g)$-legal pair exists and 
deriving a lower bound on $\tau(p)$. 
We establish a better lower bound than that of \cite{N} and prove the following.

\begin{theorem} \label{t:sfec}
Algorithm {\sc Greedy} achieves approximation ratio $3/2$.
Moreover, if  $F$ is chosen to be a path at step~4, then $d_J(v) \leq g(v)+1$ for all $v \in S$. 
\end{theorem}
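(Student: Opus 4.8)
The plan is to analyze the final state of Algorithm~\ref{alg:GC} when no $(p,g)$-legal pair remains, and to establish a lower bound on $\tau(p)$ strong enough to give ratio $3/2$. Let $(p,g)$ be the pair at termination, let $t=|T'|$ where $T'=\{v:g(v)=1\}$, and let $M$ be the set of edges split-off during the \textbf{while} loop. Since every split-off pair is $(p,g)$-legal, $g$ stays a $p$-transversal throughout; by the first item of Lemma~\ref{l:Tg} (applied at the start, to a minimal $p$-transversal) and the fact that each splitting-off decreases $g(S)$ by exactly $2$, we have $\tau(p_0) \ge g_0(S)/2 = |M| + \lceil g(S)/2\rceil \ge |M| + t/2$, where $p_0,g_0$ are the original data. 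The output has size $|M| + |F| = |M| + (t-1)$. So the crux is: \emph{at termination the final lower bound $\lceil g(S)/2 \rceil$ must in fact be close to $t$, not merely $t/2$}; concretely I expect to prove that when no legal pair exists, either $t \le$ a small constant (handled directly), or $\tau(p) \ge t-1$, or more precisely $\tau(p) \ge \frac{1}{2}(|M|\cdot 2) + (t-1)$-type bound, yielding $|M|+(t-1) \le \frac{3}{2}(|M|+ (t-1))$ trivially once $\tau(p) \ge \frac{2}{3}(|M|+t-1)$.

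The key structural step is a characterization, in the spirit of \cite{N}, of the pairs $(p,g)$ with $g$ a $\{0,1\}$-valued-on-$T'$ transversal admitting no legal pair. The expectation is that in this terminal configuration the function $p$ restricted to $T'$ behaves like the ``odd component'' cut function of a graph/hypergraph whose odd components are exactly the singletons of $T'$, forcing any cover to essentially pair up the $t$ terminals along a structure that is a forest (Lemma~\ref{l:legal} already tells us any inclusion-minimal cover on $T'$ is a forest and any tree on $T'$ is feasible, so $\tau(p|_{T'}) \le t-1$; I need the matching lower bound $\tau(p|_{T'}) \ge \lceil t/2 \rceil$ is too weak, so I need something sharper). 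The improvement over the $7/4$ bound of \cite{N} should come from showing that this residual structure cannot be ``too path-like'': one shows that if no legal pair exists then the maximum subpartition value $\sum_{A}p(A)$ over subpartitions of $S$ is at least, say, $\lceil 4(t-1)/3 \rceil$ or $2\lfloor t/2 \rfloor$ depending on parity, so that $\tau(p) \ge \frac{2}{3}(t-1)$; combined with $\tau(p_0) \ge |M| + \tau(p)$ (splitting-off preserves feasibility/optimality in the standard Mader-type sense) this gives $|M| + (t-1) \le \frac{3}{2}\tau(p_0)$.

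For the degree-bounded claim: we run the algorithm with $g=\{b(v):v\in S\}$, each splitting-off at $u,v$ decrements $g(u),g(v)$, so after the loop the budget used at each $v$ by $M$ is exactly $b(v)-g(v)$; then a \emph{path} $F$ on $T'$ adds at most $2$ more edges at each internal vertex and $1$ at each endpoint, but each $v\in T'$ has residual budget $g(v)=1$ (by definition of $T'$), so $d_{M\cup F}(v) = (b(v)-1) + d_F(v) \le b(v)-1+2 = b(v)+1$; vertices not in $T'$ have $d_F(v)=0$ hence $d_J(v)=b(v)-g(v)\le b(v)$. This matches the NP-hardness lower bound quoted from \cite{KCJ}. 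I would then remark that, via the reduction cited to \cite{FJ-survey,N,BK}, Theorem~\ref{t:eca} follows from Theorem~\ref{t:sfec}.

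The main obstacle I anticipate is the sharp lower bound $\tau(p) \ge \frac{2}{3}(t-1)$ in the no-legal-pair terminal state — i.e., ruling out the ``path-like'' residual configurations that only gave $7/4$ in \cite{N}. This requires a careful case analysis of the tight sets (those $A$ with $d_{M}(A)$ meeting $p_0(A)$ exactly and $g^{uv}$ failing for every $u,v$): one must show these tight sets interlock in a way (using symmetry and skew-supermodularity, and the fact that $p_{\max}$ may still be $\ge 2$ so Lemma~\ref{l:legal} forces $p_{\max}\le 1$ at termination on the relevant sets) that forbids the bad path structure and forces a denser subpartition. Everything else is bookkeeping on degrees and the standard splitting-off inductive descent.
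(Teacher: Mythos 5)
Your degree-bound argument (split-offs consume exactly $b(v)-g(v)$ at each $v$, a path adds at most $2$ at a node of residual budget $1$) is fine and matches the paper, which treats that part as immediate. The gap is in the main $3/2$ claim, and it is exactly at the step you flag as ``standard'': the inequality $\tau(p_0) \geq |M| + \tau(p)$, i.e.\ that every greedy legal splitting-off decreases the optimum by one. Legal splitting-off preserves the transversal (the lower bound $g(S)/2$ drops by one), and since $p^{uv} \leq p$ pointwise one gets $\tau(p^{uv}) \leq \tau(p)$, but nothing forces $\tau(p^{uv}) \leq \tau(p)-1$: an optimal cover of $p$ need not ``use'' the pair $uv$ chosen greedily, and this possible waste is precisely why the algorithm is only an approximation and why the analysis in \cite{N} was stuck at $7/4$. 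Indeed, if your inequality held, then combining it with the terminal bound $\tau(p') \geq \frac{2}{3}|T'|$ of Corollary~\ref{c:legal} (already available in \cite{N}) would give ratio $3/2$ by trivial bookkeeping, so the entire content of the theorem would be hidden in an assertion you do not prove; Mader-type splitting theorems are about preserving (local) connectivity under complete splittings at a vertex, not about the optimum of the residual covering problem dropping by one per greedy step. Your fallback route is also blocked: a subpartition value of order $\frac{4}{3}(t'-1)$ is impossible, because by Lemma~\ref{l:Tg} the maximum subpartition value of the terminal function equals $g'(S)=|T'|$ for the (minimal) terminal transversal, and each edge covers at most two parts, so subpartitions alone can never yield a factor better than $|T'|/2$; the $\frac{2}{3}|T'|$ bound comes from the structural statement of Corollary~\ref{c:legal} about sets meeting $T'$ in one or two elements, not from a denser subpartition.

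The paper's actual improvement is different and does not touch the per-step behaviour of the optimum. Writing $t=g(S)$ initially, $k=|M|$, $t'=|T'|=t-2k$, the output has size at most $t-k$, and besides $\tau(p)\geq t/2$ the paper proves the new lower bound $\tau(p) \geq \frac{2}{3}(t-k)$ on the \emph{original} optimum (Lemma~\ref{l:new}). This is done by first showing (Lemma~\ref{l:T}, by induction on $\tau(p)$) that some optimal cover $J$ of the original $p$ satisfies $d_J(v)\geq g(v)$ for all $v\in T_g$, and then counting: the edges $X\cup Y$ of $J$ meeting $T_g\setminus T'$ must absorb degree at least $t-t'=2k$, and the nodes $Q\subseteq T'$ missed by $X\cup Y$ still require, via Corollary~\ref{c:legal}, at least $\frac{2}{3}|Q|$ further edges inside $Q$; adding these contributions gives $|J|\geq \frac{2}{3}(t-k)$. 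Then $\frac{t-k}{\max\{t/2,\,\frac{2}{3}(t-k)\}}\leq \frac{3}{2}$. Without something playing the role of Lemma~\ref{l:T} and this counting (or a genuine proof of your ``optimum drops by one per split-off'' claim, which I expect is false in general), your argument does not establish the ratio $3/2$.
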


Theorem~\ref{t:sfec} second statement is obvious, so in the rest of this section we prove the first statement.
The following can be deduced from Lemma~\ref{l:legal}, see \cite{N}.

\begin{corollary} [\cite{N}] \label{c:legal}
Let $p'$ be symmetric skew-supermodular and 
$g'$ a minimal $p'$-transversal with non-empty support $T'=\{v \in V:g'(v) \geq 1\}$,
and suppose that no $(p',g')$-legal pair exists. 
Then $p'_{\max}=g'_{\max}=1$, $|T'| \geq 3$, and for every $A' \subs T'$ with $|A'| \in \{1,2\}$ there is 
$A \subs V$ with $p'(A)=1$ such that $A \cap T'=A'$.
Furthermore, $\tau(p') \geq \f{2}{3}|T'|$.
\end{corollary}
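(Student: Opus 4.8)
The plan is to derive the structural part ($p'_{\max}=g'_{\max}=1$, $|T'|\ge 3$, and the ``hitting'' property for sets $A'$ of size $1$ or $2$) directly from Lemma~\ref{l:legal} and minimality of $g'$, and then to use this structure to prove the lower bound $\tau(p')\ge\frac23|T'|$. For the structural part: since no $(p',g')$-legal pair exists, Lemma~\ref{l:legal} (contrapositive) gives $p'_{\max}\le 1$, so $p'_{\max}=1$ as $T'\ne\empt$. Minimality of $g'$ then forces $g'_{\max}=1$: if $g'(v)\ge 2$ for some $v$, reducing $g'(v)$ should break the transversal property, yielding a set $A\ni v$ with $g'(A)=p'(A)\le 1<2\le g'(v)\le g'(A)$, a contradiction. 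For each $v\in T'$, minimality gives a set $A$ with $v\in A$ and $g'(A)=p'(A)=1$; since $g'_{\max}=1$ this means $A\cap T'=\{v\}$, giving the size-$1$ case. For the size-$2$ case, take any $u,v\in T'$; since $u,v$ is not $(p',g')$-legal, $g'^{uv}$ is not a $p'^{uv}$-transversal, so there is $A$ with $g'^{uv}(A)<p'^{uv}(A)$; unwinding the definitions this forces $|A\cap\{u,v\}|=1$ impossible given how $g'^{uv}$ drops, so in fact $u,v\in A$, $p'(A)=1$, and again $g'_{\max}=1$ gives $A\cap T'=\{u,v\}$. Finally $|T'|\ge 3$: if $|T'|\le 2$ the size-$1$ and size-$2$ sets together with skew-supermodularity would produce a legal pair (this is essentially the $p_{\max}=1$ sub-case already handled inside the proof of Lemma~\ref{l:legal}).

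The substantive new content is the bound $\tau(p')\ge\frac23|T'|$, which strengthens the lower bound used in \cite{N}. The approach is via the subpartition characterization: by Lemma~\ref{l:Tg}, $g'(S)=\max\{\sum_{A\in\FF}p'(A):\FF\text{ subpartition of }S\}$, and since $g'$ is a minimal transversal with $g'_{\max}=1$ we have $g'(S)=|T'|$. So it suffices to exhibit a subpartition $\FF$ of $S$ with $\sum_{A\in\FF}p'(A)\ge\frac23|T'|$, because then $2\tau(p')\ge 2\cdot\frac{g'(S)}{2}$ is not enough by itself; rather we want to directly lower-bound $\tau(p')$ by a \emph{fractional} or combinatorial argument. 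The clean route: build an auxiliary graph $\Gamma$ on vertex set $T'$ whose edges are the pairs $\{u,v\}$ such that the size-$2$ set $A$ produced above has $p'(A)=1$ with $A\cap T'=\{u,v\}$ — but the hitting property says \emph{every} pair is such, so $\Gamma$ is complete, which is not directly what we want. Instead I would argue on the cover side: let $J$ be an optimal $p'$-cover, w.l.o.g.\ with both ends of every edge in $T'$ (Lemma~\ref{l:Tg}). Each size-$1$ set $\{v\}$ ($v\in T'$, after intersecting with $T'$) has $d_J(v)\ge p'(\widehat v)\ge 1$, so every vertex of $T'$ has degree $\ge 1$ in $J$; each size-$2$ set forces, for its certifying $A$, that $d_J(A)\ge 1$. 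The key claim is that no vertex of $T'$ can have $J$-degree $\ge 2$ ``for free'' — more precisely, that the set $A$'s of size $3$ guaranteed in the problem's NP-hardness shape ($|A|=3$ whenever $p'(A)=1$) constrain $J$ so that $J$ restricted to $T'$ cannot be much sparser than a collection of triangles, each triangle covering $3$ vertices with $2$ edges, hence $\tau(p')\ge\frac23|T'|$.

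Concretely, the main step will be: show that the family $\{A\cap T': p'(A)=1\}$ of ``trace sets'' on $T'$ has the property that no single edge of $J$ can cover two disjoint required sets unless those sets overlap in $T'$, and use a combinatorial/LP-duality argument (assign weight $\frac13$ to each element of $T'$; show this is a feasible fractional solution to the covering LP dual, i.e.\ for every potential edge $uv$ the total dual weight of sets separated by $uv$ that must be covered is $\le 1$ — wait, we need the opposite direction for a lower bound on $\tau$). The correct duality: $\tau(p')=$ min edges to cover, and by LP duality this is $\ge$ max over fractional packings $y$ of required cuts with $\sum_{A}y_A\,[\text{uv separates }A]\le 1$ for all $uv$. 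Taking $y_{\widehat v}=\frac13$ for each singleton-trace set $\widehat v$ and $y_A=\frac13$ appropriately on the size-$2$-trace sets, and checking that each candidate edge $uv$ separates a bounded number of these, yields $\sum y_A\ge\frac{|T'|}{3}\cdot 2=\frac23|T'|$.

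The hard part will be verifying the feasibility of this fractional packing — i.e.\ bounding, for an arbitrary pair $uv$ with $u,v\in T'$, the total $y$-weight of required sets $A$ with $|A\cap\{u,v\}|=1$. This is where the $|A|=3$ structure and the skew-supermodularity of $p'$ must be exploited to prevent too many required sets from being simultaneously separated by one edge; I expect this to require a short uncrossing argument showing that the minimal sets $A$ with $p'(A)=1$ containing a given vertex $v$ of $T'$ are essentially laminar when restricted to $T'$, so that each $uv$ separates at most a constant number of them. Once feasibility is in hand, the bound $\tau(p')\ge\frac23|T'|$ follows immediately, completing the corollary.
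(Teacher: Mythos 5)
Your structural half is broadly on the right track: the contrapositive of Lemma~\ref{l:legal} gives $p'_{\max}\le 1$, minimality of $g'$ gives $g'_{\max}=1$ together with the singleton traces, and non-legality of each pair $u,v\in T'$ yields (after the case analysis you sketch) a set $A$ with $p'(A)=1$ and $A\cap T'=\{u,v\}$. One local flaw: your argument for $|T'|\ge 3$ (``essentially the $p_{\max}=1$ sub-case handled inside Lemma~\ref{l:legal}'') does not exist -- that lemma asserts nothing when $p_{\max}=1$. The clean argument uses \emph{symmetry}: if $|T'|\le 2$ then the singleton/pair trace property gives a set $A$ with $p'(A)=1$ and $T'\subs A$, whence $p'(S\sem A)=1$ while $g'(S\sem A)=0$, contradicting that $g'$ is a transversal.

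The genuine gap is the key inequality $\tau(p')\ge\f{2}{3}|T'|$, which you do not prove and for which your proposed route cannot work. First, the property ``$|A|=3$ whenever $p'(A)=1$'' that you invoke is not a hypothesis of the corollary; in the paper it describes a specific NP-hard instance family and is unavailable here. Second, and more fundamentally, no LP-duality/fractional-packing argument can certify $\f{2}{3}|T'|$: already for $|T'|=3$ putting weight $\f{1}{2}$ on each of the three edges inside $T'$ is a feasible \emph{fractional} cover of all singleton- and pair-trace sets, so the maximum fractional packing is at most $\f{3}{2}$, while the claimed bound is $2$; in general, weight $\f{1}{2}$ on a Hamiltonian cycle of $T'$ shows the fractional optimum is at most $|T'|/2<\f{2}{3}|T'|$. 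The bound is inherently integral, and the intended argument (the one implicitly used again in Lemma~\ref{l:new}) is much simpler than what you attempt: take an optimal cover $J$ all of whose edges have both ends in $T'$ (Lemma~\ref{l:Tg}); the singleton traces force $d_J(v)\ge 1$ for every $v\in T'$, and the pair traces forbid any connected component of $(T',J)$ spanning exactly two vertices $u,v$, since then every edge of $J$ would have both or no ends in $\{u,v\}$ and the set $A$ with $A\cap T'=\{u,v\}$, $p'(A)=1$ would be uncovered. Hence every component spans $t_i\ge 3$ vertices and contains at least $t_i-1\ge\f{2}{3}t_i$ edges, and summing over components gives $|J|\ge\f{2}{3}|T'|$. (For calibration: the paper itself does not reprove this corollary but cites \cite{N}; the above is the short argument it relies on.)
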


We now describe the analysis of the $7/4$-approximation \cite{N}.
Let $k$ be the number of edges accumulated in $M$ during the while-loop.
Let $t=g(S)$ be the initial value of the $p$-transversal $g$ and let $t'=|T'|$ be the 
the transversal value at the beginning of step~4.
Note that $t-t'=2k$ and that $|F|=|T'|-1 = t-2k-1$.
Consequently, $|M \cup F| \leq k+(t-2k-1) \leq t-k$.
On the other hand we have the lower bounds $\tau(p) \geq t/2$ and $\tau(p) \geq \f{2}{3}|T'|=\f{2}{3}(t-2k)$.
Thus the approximation ratio is bounded by
$\f{t-k}{\max\{t/2,2(t-2k)/3\}} \leq 7/4$, with $k=t/8$ being the worse case. 

One can observe that if $k \geq t/4$ then $t/2 \geq \f{2}{3}(t-2k)$, and thus in this case 
the ratio is bounded by $\f{t-k}{t/2} \leq \f{3/4}{1/2}=3/2$.
To get ratio $3/2$ for the range $k \leq t/4$ we give a better lower bound on $\tau(p)$.
For this, we need the following lemma.

\begin{lemma} \label{l:T}
Let $g$ be a minimal transversal of a skew-supermodular symmetric set function $p$.
Then there exists an optimal $p$-cover $J$ such that 
$d_J(v) \geq g(v)$ if $v \in T_g$ and $d_J(v)=0$ otherwise.
\end{lemma}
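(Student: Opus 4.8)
The plan is to prove the existence of an optimal $p$-cover $J$ with the prescribed degrees by modifying an arbitrary optimal cover. First I would invoke the second bullet of Lemma~\ref{l:Tg} to obtain an optimal $p$-cover $J_0$ in which every edge has both ends in $T_g$; this immediately gives $d_{J_0}(v)=0$ for $v\notin T_g$, and this property is preserved by any subsequent modification that only reroutes edges within $T_g$. So the whole task reduces to arranging $d_J(v)\geq g(v)$ for $v\in T_g$ while keeping $J$ an optimal cover supported on $T_g$.

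The key step is to show that if some $v\in T_g$ has $d_{J_0}(v)<g(v)$, we can modify $J_0$ without increasing its size, without destroying coverage, and without decreasing $d_J(v)$ at any vertex where it already meets $g$ — so that a suitable potential like $\sum_{v\in T_g}\max\{g(v)-d_J(v),0\}$ strictly decreases, giving termination. The natural mechanism is the following. Since $g$ is a \emph{minimal} $p$-transversal, reducing $g(v)$ by one kills the transversal property, so there is a ``tight'' set $A_v\subs S$ with $v\in A_v$ and $g(A_v)=p(A_v)$. For $J_0$ a cover we have $d_{J_0}(A_v)\geq p(A_v)=g(A_v)\geq g(v)+g(A_v\cap T_g\sem v)$. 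If $d_{J_0}(v)<g(v)$ then $v$ has ``room'' on $A_v$: there must be an edge $e=xy$ of $J_0$ crossing $A_v$ with neither end at $v$ — because the edges at $v$ alone cannot account for $d_{J_0}(A_v)$ — or, more carefully, we use a counting argument comparing $d_{J_0}(v)$ with the demand $p(A_v)$ to extract an edge we can ``pivot'' so that one end is moved to $v$. Rerouting that edge to have $v$ as an endpoint increases $d_J(v)$ by one; one then checks via skew-supermodularity of $p$ (and the fact that $A_v$ was tight while other sets had slack) that coverage $d_J(A)\geq p(A)$ is maintained for all $A$, and that the size of $J$ is unchanged. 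Iterating drives the potential to zero.

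The main obstacle I anticipate is the pivoting/uncrossing step: showing that the edge we reroute can be chosen so that \emph{no} set loses coverage. Moving an endpoint of $e$ from $x$ to $v$ decreases $d_J(A)$ by one exactly for sets $A$ that separate $x$ from $v$ while not being crossed differently by the new edge; we must guarantee every such set $A$ had $d_{J_0}(A)\geq p(A)+1$ before the move. This is where skew-supermodularity and minimality of $g$ do the work: the family of tight sets (with $d_{J_0}(A)=p(A)$) through $v$ forms a well-structured subfamily, and by choosing $A_v$ to be, say, an inclusion-minimal tight set containing $v$, or by choosing the edge $e$ to cross $A_v$ appropriately, one confines the damage to sets that either are not tight or already separate $v$ from the other end of $e$ in a way that the old edge already covered. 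A clean way to organize this is: pick a minimal tight set $A$ with $v\in A$, argue $d_{J_0}(A)=p(A)$ and that because $d_{J_0}(v)<g(v)\le p(A)$ there is an edge of $J_0$ in $\delta(A)$ not incident to $v$ whose endpoint inside $A$ can be redirected to $v$ — minimality of $A$ plus the tightness bookkeeping ensures any newly-violated set would contradict either minimality of $A$ or skew-supermodularity applied to $A$ and that set. Once this one-edge move is justified, the rest is a routine termination argument via the potential $\sum_v\max\{g(v)-d_J(v),0\}$, which is non-negative, integer-valued, and strictly decreases at each step.
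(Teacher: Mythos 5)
Your reduction to vertices of $T_g$ via Lemma~\ref{l:Tg} and the observation that a deficient $v$ lies in a $g$-tight set $A_v$ crossed by some $J_0$-edge not incident to $v$ are fine, but the heart of your argument --- that the crossing edge $xy$ can be redirected to $vy$ ``so that \emph{no} set loses coverage'' --- is exactly the content you would need to prove, and it is only asserted. In an optimal cover a great many sets $B$ satisfy $d_{J_0}(B)=p(B)$ (optimality forces tight sets through essentially every edge), and a set $B$ with $x\in B$, $v,y\notin B$ (or its complement) loses a unit of coverage under your move. Your claim that ``minimality of $A$ plus skew-supermodularity'' rules such $B$ out mixes two different tightness notions: $A$ is tight for the \emph{transversal} ($g(A)=p(A)$) while the dangerous sets are tight for the \emph{cover} ($d_{J_0}(B)=p(B)$), and skew-supermodularity applied to $A$ and $B$ yields no contradiction between these; indeed, in the same vein you write ``argue $d_{J_0}(A)=p(A)$'', which does not follow from $g$-tightness. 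At minimum you would need an uncrossing argument showing that \emph{some} admissible edge/endpoint choice avoids all cover-tight obstructions, and no such argument is given. A second, independent problem is termination: redirecting $xy$ to $vy$ decreases $d_J(x)$ by one, so if $d_{J_0}(x)=g(x)$ your potential $\sum_v\max\{g(v)-d_J(v),0\}$ does not strictly decrease (the deficiency just migrates from $v$ to $x$); you would also have to guarantee a donor endpoint with strict surplus inside $A_v$, which again is not argued and is not automatic.

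For comparison, the paper avoids all of this by induction on $\tau(p)$: take an optimal cover $J$ supported on $T_g$, remove one edge $e=uv$, pass to the residual function $p^{uv}$, note that a minimal $p^{uv}$-transversal is obtained from $g$ by decreasing it only at $u$ and/or $v$ (one of $g,g^u,g^v,g^{uv}$), apply the induction hypothesis to get an optimal $p^{uv}$-cover with the degree property, and add $e$ back; since covers of $p^{uv}$ plus $e$ are covers of $p$, optimality and the degree bounds are restored in one step. If you want to salvage the exchange route you must prove the coverage-preserving redirect exists (an uncrossing lemma over cover-tight sets) and repair the termination measure; as written, both steps are gaps.
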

\begin{proof}
By induction on $\tau(p)$. The base case $\tau(p)=1$ is trivial. For $\tau(p) \geq 2$,
let $J$ be an optimal $p$-cover such that every $e \in J$ has both ends in $T_g$; 
such exists by Lemma~\ref{l:Tg}.
Choose some $e=uv \in J$ and let $p'=p^{uv}$. 
Let $g^u$ be obtained from $g$ by decreasing $g(u)$ by $1$, and similarly $g^v$ is defined.
Then one of $\{g,g^u,g^v,g^{uv}\}$ is a minimal $p'$-transversal; denote it by $g'$.
By the induction hypothesis there exists a $p'$-cover $J'$ such that: 
$d_{J'}(w) \geq g'(w)$ if $w \in T_{g'}$ and $d_{J'}(w)=0$ otherwise.
It is easy to see that $J=J' \cup \{e\}$ has the required property.
\qed
\end{proof}

\begin{lemma} \label{l:new}
$\tau(p) \geq \f{2}{3}(t-k)$. 
\end{lemma}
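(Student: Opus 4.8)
The goal is to prove $\tau(p) \geq \frac{2}{3}(t-k)$, where $t = g(S)$ is the initial transversal value, $k$ is the number of split-off edges accumulated in $M$ during the while-loop, and $t' = |T'| = t - 2k$ is the transversal value at the start of step~4. We already know $\tau(p) \geq \frac{2}{3}t' = \frac{2}{3}(t-2k)$ from Corollary~\ref{c:legal}, so the new bound is an improvement by $\frac{2}{3}k$. The natural idea is that each of the $k$ split-off edges ``should be paid for'' by the optimal cover, and that this payment can be counted on top of the $\frac{2}{3}|T'|$ lower bound for the residual instance $(p', g')$ reached at step~4.

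The approach I would take: apply Lemma~\ref{l:T} to the \emph{original} pair $(p,g)$ — but first I need $g$ to be a \emph{minimal} transversal, so I would start from a minimal $p$-transversal $g_0$ with $g_0(S) =: t_0 \le t$ (WLOG we can take the input $g$ minimal, or pass to one; note $g_0(S)$ is the max of $\sum_{A\in\FF} p(A)$ over subpartitions $\FF$ by Lemma~\ref{l:Tg}, and $\tau(p)\ge t_0/2$). By Lemma~\ref{l:T} there is an optimal $p$-cover $J$ with $d_J(v)\ge g_0(v)$ for $v\in T_{g_0}$ and $d_J(v)=0$ otherwise; in particular $J$ lives on $T_{g_0}$ and $2\tau(p) = \sum_v d_J(v) \ge t_0$. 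Now I would track how the support shrinks along the while-loop: each split-off at a legal pair $u,v$ removes one unit of $g$ from each of $u,v$, so after $k$ splits the remaining support $T'$ satisfies (for a minimal transversal of the residual $p'$) $|T'|\le t_0 - 2k$... but this is exactly the bound we already have. The real content must be to compare $J$ restricted to $T'$ against an optimal cover of the residual instance, and to argue that the $k$ split edges are genuinely ``extra.''

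The cleaner route, which I expect is what the paper does: combine the two lower bounds by looking at the optimal cover $J$ and partitioning its contribution. Take the optimal $p$-cover $J$ from Lemma~\ref{l:T} with $d_J(v)\ge g_0(v)$ on $T_{g_0}$. Running the split-off process corresponds to peeling off $k$ edge-slots; more precisely, I would show by induction on $k$ (mirroring the induction in Lemma~\ref{l:T}) that from $J$ one can extract $k$ edges $e_1,\dots,e_k$ such that $J \setminus \{e_1,\dots,e_k\}$ is a cover of $p'$ — equivalently, $\tau(p') \le \tau(p) - k$. Then, since the residual instance $(p',g')$ has no legal pair, Corollary~\ref{c:legal} gives $\tau(p') \ge \frac{2}{3}|T'| = \frac{2}{3}(t-2k)$ (using $t_0 \ge t'$ appropriately, or directly $|T'| = t - 2k$ as noted in the text). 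Hence $\tau(p) \ge \tau(p') + k \ge \frac{2}{3}(t - 2k) + k = \frac{2}{3}t - \frac{4}{3}k + k = \frac{2}{3}t - \frac{1}{3}k = \frac{2}{3}(t - \frac{1}{2}k)$. That gives only $\frac{2}{3}(t - k/2)$, not $\frac{2}{3}(t-k)$ — so this bookkeeping is not yet tight, and the missing factor must come from also using $\tau(p)\ge t/2$ more cleverly, or from a sharper accounting that each split edge $e_i$ is incident to two support vertices that contribute to \emph{both} lower bounds simultaneously.

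So the main obstacle — and where I would focus the effort — is the precise amortized argument that recovers the full $\frac{2}{3}(t-k)$ rather than the weaker $\frac{2}{3}(t-k/2)$. I suspect the right statement is: from Lemma~\ref{l:T} applied to a minimal transversal, $2\tau(p) \ge t$; and separately the residual structure forces $3\tau(p) \ge 2t' + (\text{something counting the splits})$. Concretely, I would try to prove directly that $3\tau(p) \ge 2(t-k)$ by taking the optimal cover $J$ on $T_{g_0}$, noting $|J| = \tau(p)$ and $\sum_{v} d_J(v) = 2\tau(p) \ge t_0$, then charging: the $t' = t - 2k$ support vertices surviving to step~4 each need degree $\ge 1$ in $J$ and, by the Corollary~\ref{c:legal} triple-structure, impose the $\frac{2}{3}t'$ bound on the sub-cover they induce, while the $2k$ vertex-units consumed by splits each forced a distinct edge-endpoint in $J$; balancing $|J|$ against $\max\{t_0/2,\ 2t'/3 + (\text{charge for }k)\}$ and optimizing yields $\frac23(t-k)$. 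The delicate point is ensuring the ``charge for $k$'' edges are disjoint from the edges realizing the $\frac23 t'$ bound — which should follow because the split edges act on support units that are subsequently gone, whereas the $\frac23 t'$ bound is about the residual support $T'$. I would set this up as an explicit induction on $k$ with the invariant $\tau(p) \ge \frac{2}{3}(g(S) - (\text{number of splits so far}))$ carried through a single split-off step via Lemma~\ref{l:T}'s one-edge extraction, checking that the base case $k=0$ is exactly Corollary~\ref{c:legal} and that one split decreases $g(S)$ by $2$ while increasing the split count by $1$, so the invariant's right side drops by $\frac43$ while $\tau$ drops by at most $1 < \frac43$ only if... — and it is precisely this last inequality direction that is the crux and must be argued from the structure of legal pairs, not from arithmetic alone.
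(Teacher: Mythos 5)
There is a genuine gap: you assemble the right ingredients (Lemma~\ref{l:T} and Corollary~\ref{c:legal}) but never produce an argument that actually yields $\f{2}{3}(t-k)$. Your first route (extract $k$ edges from an optimal cover so that the rest covers the residual function, i.e.\ $\tau(p')\le\tau(p)-k$) is both unproven and, as you yourself compute, too weak -- it only gives $\f{2}{3}(t-k/2)$. Your second route is a charging sketch whose decisive step -- making the ``charge for the $k$ splits'' disjoint from the edges realizing the $\f{2}{3}|T'|$ bound -- is exactly the point you leave open, and your closing induction ends at an inequality direction you admit you cannot verify. A proof proposal that stops at the crux is not a proof.

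The missing idea is a direct edge-partition of a single optimal cover, with no induction over the split-off steps at all. Take an optimal $p$-cover $J$ as in Lemma~\ref{l:T}, so $d_J(v)\ge g(v)$ on $T_g$ (here $g$ is the initial, minimal, transversal -- your worry about minimality is handled by just taking $g$ minimal from the start). Let $X$ be the edges of $J$ with both ends in $T_g\sem T'$ and $Y$ those with exactly one end there, $x=|X|$, $y=|Y|$. The degree condition on the $2k$ units of transversal value that the splits consume gives $2x+y\ge t-t'=2k$, i.e.\ $y\ge 2(k-x)$. Let $Q$ be the nodes of $T'$ missed by $X\cup Y$, so $|Q|\ge t'-y=t-2k-y$, and let $z$ count edges of $J$ with both ends in $Q$. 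Corollary~\ref{c:legal} applies to every singleton and pair inside $Q$ (the deficient sets it provides meet $T'$ exactly in that singleton or pair, and such sets are not covered by $X\cup Y$), forcing $z\ge\f{2}{3}|Q|\ge\f{2}{3}(t-2k-y)$. Then $|J|\ge x+y+z\ge x+\f{1}{3}y+\f{2}{3}(t-2k)\ge x+\f{2}{3}(k-x)+\f{2}{3}(t-2k)=\f{1}{3}x+\f{2}{3}(t-k)\ge\f{2}{3}(t-k)$. Note how the $\f{1}{3}y$ term that survives the arithmetic is precisely what recovers the factor you lost: edges meeting $T_g\sem T'$ are charged fractionally against both the split units and the residual structure on $T'$, rather than being counted once as in your $\tau(p)\ge\tau(p')+k$ attempt.
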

\begin{proof}
% Let $t=g(V)$ and let $t'=|T'|$. Note that $t-t'=2k$.
Let $J$ be a $p$-cover as in Lemma~\ref{l:T}.
Let $X$ be the set of edges in $J$ with both ends in $T_g \sem T'$
and $Y$ the set of edges in $J$ with exactly one end in $T_g \sem T'$; let $x=|X|$ and $y=|Y|$.
Since $d_J(v) \geq g(v)$ for all $v \in T$, 
$2x+y \geq t-t'=2k$, hence $y \geq 2k-2x$.
Let $Q$ be the set of nodes in $T'$ that are uncovered by edges in $X \cup Y$
and let $z$ be the number of edges in $J$ with both ends in $Q$.
Note that $|Q| \geq t'-y=t-2k-y$. 
By Corollary~\ref{c:legal}, for every $A' \subs Q$ with $|A'| \in \{1,2\}$ there is 
$A \subs V$ with $p(A)>0$ such that $A \cap T'=A'$.
This implies that $z \geq \f{2}{3}|Q| \geq \f{2}{3}(t-2k-y)$.
Consequently, since $|J| \geq x+y+z$ and $y \geq 2(k-x)$
$$
|J| \geq x+y+\f{2}{3}(t-2k-y)=x+\f{1}{3}y+\f{2}{3}(t-2k) \geq	x+\f{2}{3}(k-x)+\f{2}{3}(t-2k)=\f{1}{3}x+\f{2}{3}(t-k) \ .
$$
Since $x \geq 0$ we get $|J| \geq \f{2}{3}(t-k)$.
\qed
\end{proof}

From Lemma~\ref{l:new} it follows that the approximation ratio of the algorithm is bounded by
$\f{t-k}{\max\{t/2,2(t-k)/3\}} \leq 3/2$, with $k=t/4$ being the worse case. 

\medskip

This concludes the proof of Theorem~\ref{t:sfec}, and thus also the proof of Theorem~\ref{t:eca} is complete.

%\bibliography{aug-bib}

\end{document}